\documentclass[sigconf]{acmart}

\AtBeginDocument{%
  }


\copyrightyear{2023} 
\acmYear{2023} 
\setcopyright{acmlicensed}\acmConference[WWW '23]{Proceedings of the ACM Web Conference 2023}{April 30-May 4, 2023}{Austin, TX, USA}
\acmBooktitle{Proceedings of the ACM Web Conference 2023 (WWW '23), April 30-May 4, 2023, Austin, TX, USA}
\acmPrice{15.00}
\acmDOI{10.1145/3543507.3583313}
\acmISBN{978-1-4503-9416-1/23/04}

\acmSubmissionID{3008}



\graphicspath{{./figures/}}
\usepackage{subfigure}

\usepackage{algorithm}
\usepackage{algorithmic}


\newtheorem{assumption}{Assumption}[section]

\begin{document}

\title[RL-MPCA]{RL-MPCA: A Reinforcement Learning Based Multi-Phase Computation Allocation Approach for Recommender Systems}



\author{Jiahong Zhou}
\authornote{Corresponding author}
\affiliation{%
 \institution{Meituan, Beijing, China}
 \city{}
 \country{}
}
\email{zhoujiahong02@meituan.com}

\author{Shunhui Mao}
\affiliation{%
 \institution{Meituan, Beijing, China}
 \city{}
 \country{}
}
\email{maoshunhui@meituan.com}

\author{Guoliang Yang}
\affiliation{%
 \institution{Meituan, Beijing, China}
 \city{}
 \country{}
}
\email{yangguoliang@meituan.com}

\author{Bo Tang}
\affiliation{%
 \institution{Meituan, Beijing, China}
 \city{}
 \country{}
}
\email{tangbo17@meituan.com}

\author{Qianlong Xie}
\affiliation{%
 \institution{Meituan, Beijing, China}
 \city{}
 \country{}
}
\email{xieqianlong@meituan.com}

\author{Lebin Lin}
\affiliation{%
 \institution{Meituan, Beijing, China}
 \city{}
 \country{}
}
\email{linlebin@meituan.com}

\author{Xingxing Wang}
\affiliation{%
 \institution{Meituan, Beijing, China}
 \city{}
 \country{}
}
\email{wangxingxing04@meituan.com}

\author{Dong Wang}
\affiliation{%
 \institution{Meituan, Beijing, China}
 \city{}
 \country{}
}
\email{wangdong07@meituan.com}

\renewcommand{\shortauthors}{Jiahong Zhou et al.}

\begin{abstract}
  
  Recommender systems aim to recommend the most suitable items to users from a large number of candidates. Their computation cost grows as the number of user requests and the complexity of services (or models) increases.
  Under the limitation of computation resources (CRs), how to make a trade-off between computation cost and business revenue becomes an essential question. 
  The existing studies focus on dynamically allocating CRs in queue truncation scenarios (i.e., allocating the size of candidates), and formulate the CR allocation problem as an optimization problem with constraints. Some of them focus on single-phase CR allocation, and others focus on multi-phase CR allocation but introduce some assumptions about queue truncation scenarios. However, these assumptions do not hold in other scenarios, such as retrieval channel selection and prediction model selection. Moreover, existing studies ignore the state transition process of requests between different phases, limiting the effectiveness of their approaches. 

  This paper proposes a Reinforcement Learning (RL) based Multi-Phase Computation Allocation approach (RL-MPCA), which aims to maximize the total business revenue under the limitation of CRs. RL-MPCA formulates the CR allocation problem as a Weakly Coupled MDP problem and solves it with an RL-based approach. Specifically, RL-MPCA designs a novel deep Q-network to adapt to various CR allocation scenarios, and calibrates the Q-value by introducing multiple adaptive Lagrange multipliers (adaptive-$\lambda$) to avoid violating the global CR constraints.
  Finally, experiments on the offline simulation environment and online real-world recommender system validate the effectiveness of our approach.
\end{abstract}

\begin{CCSXML}
  <ccs2012>
  <concept>
  <concept_id>10002951.10003317.10003347.10003350</concept_id>
  <concept_desc>Information systems~Recommender systems</concept_desc>
  <concept_significance>500</concept_significance>
  </concept>
  <concept>
  <concept_id>10002951.10003260.10003272</concept_id>
  <concept_desc>Information systems~Online advertising</concept_desc>
  <concept_significance>300</concept_significance>
  </concept>
  <concept>
  <concept_id>10002951.10003227.10003447</concept_id>
  <concept_desc>Information systems~Computational advertising</concept_desc>
  <concept_significance>300</concept_significance>
  </concept>
  </ccs2012>
\end{CCSXML}
\ccsdesc[500]{Information systems~Recommender systems}
\ccsdesc[300]{Information systems~Online advertising}
\ccsdesc[300]{Information systems~Computational advertising}

\keywords{Computation Resource Allocation, Deep Reinforcement Learning, Recommender System, Weakly Coupled MDP}


\maketitle

\section{Introduction \label{sec:introduction}}
Recommender systems aim to recommend the most suitable items 
to users from a large number of candidates and expect to gain revenue from users' views, clicks, and purchases. They are playing an increasingly important role in e-commerce platforms \cite{hussien2021recommendation}.

\begin{figure}[htbp]
  \centering
  \includegraphics[width=0.9\linewidth]{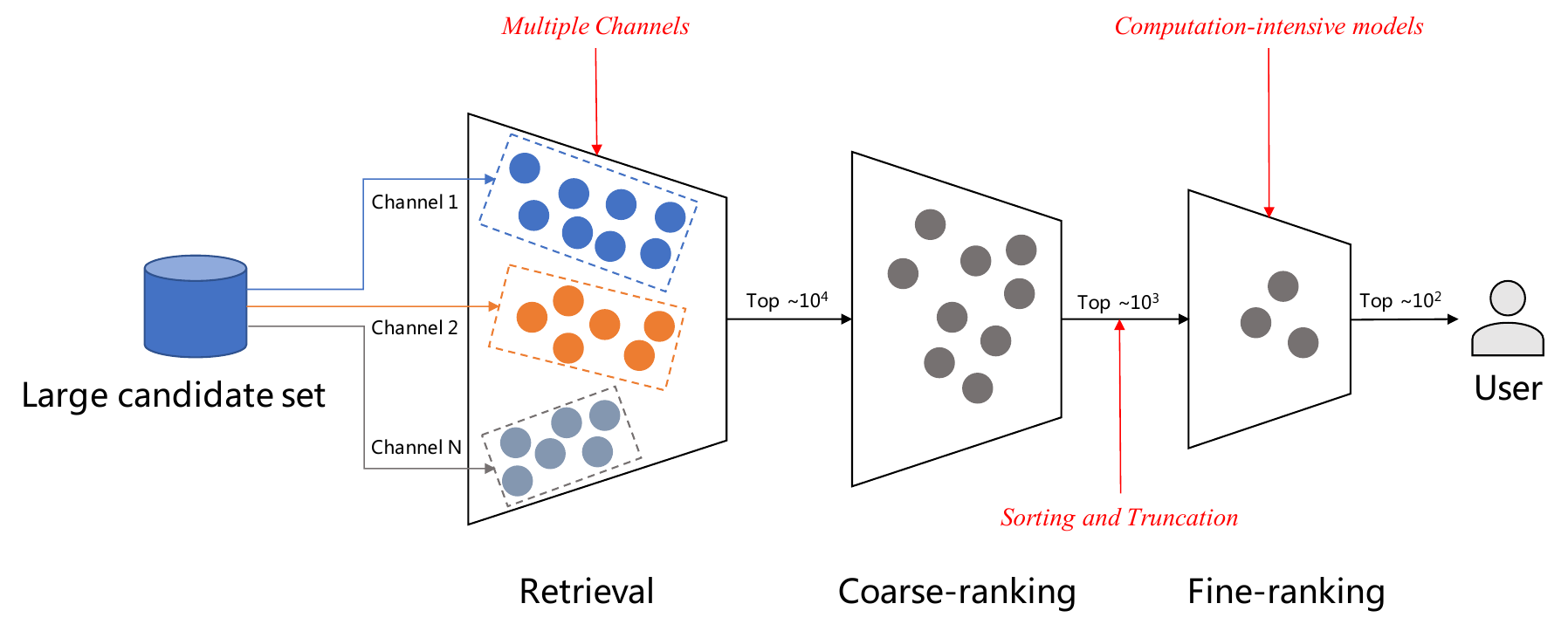}
  \caption{The typical structure of recommender systems.}
  \Description{}
  \label{fig:typical-recommender-system}
\end{figure}

Industrial recommender systems are often designed as cascading architectures \cite{covington2016deep, liu2017cascade}. 
As shown in Figure \ref{fig:typical-recommender-system}, a typical recommender system consists of several stages, including retrieval, coarse-ranking, fine-ranking, etc. 
In these stages, online advertising systems (a kind of recommender system applied to online advertising) generally contain several computation-intensive services or models, including bid models \cite{yang2020motiac, he2021unified}, prediction services \cite{zhou2018deep, feng2019deep}, etc. These services require a lot of computation resources\footnote[1]{In general, computation resources include CPU/GPU computing capacity, memory capacity and response time, etc.} (CRs). 
Take the display advertising system of Meituan Waimai platform\footnote[2]{\url{https://waimai.meituan.com/}, one of the largest e-commerce platforms in China.} (hereinafter referred to as Meituan advertising system), for example. It consumes a lot of CRs in both the retrieval stage and the fine-ranking stage.
As the number of user requests increases dramatically, the system's CR consumption rises accordingly. 
Due to the limitation of CRs, recommender systems need to make a trade-off between CR cost and business revenue when the traffic exceeds the system load. 
From the perspective of CR utilization efficiency, the goal of recommender systems is to maximize the total business revenue under the CR constraint.

To address the challenges of huge traffic and a large number of candidate items, the real-world recommender systems usually use two types of strategies: static strategies and dynamic strategies \cite{jiang2020dcaf, yang2021computation}. 
Static strategies select suitable fixed rules through stress testing and practical experience to allocate CRs. 
They also provide fixed downgrades to cope with unexpected traffic. 
Static strategies require constant manual intervention to adapt to quick changes in traffic, and fixed downgrades provided by static strategies are generally detrimental to business revenue and user experience. 
Dynamic strategies \cite{jiang2020dcaf, yang2021computation} dynamically allocate CRs for requests based on the value of requests.
They prioritize allocating CRs to more valuable requests to achieve better revenue. Compared to static strategies, dynamic strategies are more efficient in utilizing CRs and require fewer manual intervention. 

Recommender systems with multiple stages have various CR allocation scenarios. Based on the application scenario, we summarize the dynamic CR allocation methods into three types: \textbf{Elastic Channel}, \textbf{Elastic Queue}, and \textbf{Elastic Model}:

$\bullet$ \textbf{Elastic Channel}: \textit{dynamically adjust the retrieval strategy}. A typical recommender system contains multiple retrieval channels. When CRs are insufficient, static strategies usually use fixed rules to drop some retrieval channels with high computation consumption. Different to static strategies, \textbf{\textit{Elastic Channel}} dynamically adjusts the retrieval strategy for each request according to the online environment and the features of the request.

$\bullet$ \textbf{Elastic Queue}: \textit{dynamically adjust the length of queue}. Under the limitation of CRs, recommender systems cannot provide the prediction service and ranking service for all candidate items. In static strategies, before entering the prediction service and ranking service, the queue of items needs to be truncated to a global fixed length. In contrast, \textbf{\textit{Elastic Queue}} dynamically adjusts truncation for each request length according to the online environment and the features of the request.

$\bullet$ \textbf{Elastic Model}: \textit{dynamically select prediction models}. 
Recommender systems often provide multiple prediction models with different computation consumption for one prediction service. A complex model achieves better revenue while taking more computation consumption. When CRs are insufficient, static strategies usually use fixed rules to downgrade high computation consumption models to low consumption models.
In contrast, \textbf{\textit{Elastic Model}} dynamically adjusts the prediction model for each request according to the online environment and the features of the request.

Recently, some dynamic strategies \cite{jiang2020dcaf,yang2021computation} have been proposed to achieve ``personalized" CR allocation. DCAF \cite{jiang2020dcaf} focuses on a single CR allocation phase. CRAS \cite{yang2021computation} focuses on multi-phase queue truncation problems, but it introduces some assumptions about Elastic Queue scenario. For example, it uses the queue length to represent the computation cost when modeling the CR allocation problem, and assumes that the revenue varies logarithmically with the queue length. However, these assumptions do not hold in Elastic Channel and Elastic Model scenarios. Moreover, existing studies ignore the state transition process of requests between different phases, which limits the effectiveness of their approaches. 


To address the limitations of existing studies, we propose RL-MPCA, which formulates the CR allocation problem as a Weakly Coupled Markov Decision Process (Weakly Coupled MDP) \cite{meuleau1998solving} problem and solves it with an RL-based approach. Compared to Constrained Markov Decision Process (CMDP) \cite{altman1999constrained}, Weakly Coupled MDP allows global weakly coupled constraints across sub-MDPs. Thus, it can model the problem of CR allocation across requests better than CMDP \cite{boutilier2016budget, chen2021primal, adelman2008relaxations}. 

Our main contributions are summarized as follows:
\begin{enumerate}
\item We propose an innovative CR allocation solution for recommender systems. To the best of our knowledge, this is the first work that formulates the CR allocation problem as a Weakly Coupled MDP problem and solves it with an RL-based approach.
\item We design a novel multi-scenario compatible Q-network adapting to the various CR allocation scenarios, then calibrate Q-value by introducing multiple adaptive Lagrange multipliers (adaptive-$\lambda$) to avoid violating the global CR constraints in training and serving.
\item We validate the effectiveness of our proposed RL-MPCA\footnote[3]{The publicly accessible code at \url{https://anonymous.4open.science/r/RL-MPCA-130D}.} approach through offline experiments and online A/B tests. Offline experiment results show that RL-MPCA can achieve better revenue than baseline approaches while satisfying the CR constraints.
Online A/B tests demonstrate the effectiveness of RL-MPCA in real-world industrial applications.

\end{enumerate}
\section{Related Work}
\subsection{CR Allocation and RL for Recommender Systems}
Recommender systems
 have been a popular topic in industry and academia in recent years. 
Most studies focus on improving the business revenue under the assumption of sufficient CRs \cite{zhao2020jointly, zhao2021dear}. 
Some of these studies focus on applying RL to recommender systems, including recommendations \cite{gao2019drcgr,zhou2020interactive,deng2021unified}, real-time bidding\cite{wu2018budget,tang2020optimized}, ad slots allocation \cite{xie2021hierarchical,zhao2021dear,liao2022cross}, etc.
Some studies concern CR consumption and try to reduce it through model compression \cite{polino2018model, fan2020training}. 
All the above studies rarely focus on CR allocation. 
As an exception, DCAF \cite{jiang2020dcaf} and CRAS \cite{yang2021computation} propose two ``personalized'' CR allocation approaches. They formulate the Elastic Queue CR allocation problem as an optimization problem, and then solve it with linear programming algorithms. 
%
Different from the above studies, our proposed RL-MPCA uses an RL-based dynamic CR allocation approach to improve the effectiveness.


\subsection{RL and Weakly Coupled MDPs}
A Weakly Coupled MDP \cite{meuleau1998solving} comprises multiple sub-MDPs, which are independent except that global resource constraints weakly couple them \cite{chen2021primal}. 
Due to the linking constraints, the scale of the problem grows exponentially in the number of sub-problems \cite{chen2021primal}.
Some studies try to relax Weakly Coupled MDP to CMDP \cite{altman1999constrained} and then solve it \cite{chen2021primal, adelman2008relaxations}.
The solutions to the CMDP problem include CPO \cite{achiam2017constrained}, RCPO \cite{tessler2018reward}, IPO \cite{liu2020ipo}, etc. They focus on the internal constraints of MDP. 
Recently, some studies focus on directly solving Weakly Coupled MDP problems. 
BCORLE($\lambda$) \cite{zhang2021bcorle} solves it with $\lambda$-generalization.
BCRLSP \cite{chen2022bcrlsp} first trains the unconstrained reinforcement model and then imposes a global constraint on the model with linear programming methods in near real-time. 
Both BCORLE and BCRLSP guarantee that budget allocations strictly satisfy a single global constraint.
CrossDQN \cite{liao2022cross} attempts to make the model avoid violating a single global constraint by introducing auxiliary batch-level loss. It uses a soft version of argmax to solve the problem of non-derivability of the native argmax function, which makes the model unable to strictly satisfy the global constraints during both offline training and online serving. 

Offline RL methods aim to learn effective policies from a fixed dataset without further interaction with the environment \cite{fujimoto2019off}. 
Off-policy methods (e.g., DQN \cite{mnih2015human}, DDQN \cite{van2016deep}) can be directly applied to Offline RL while ignoring the out-of-distribution (OOD) problem.
To solve the OOD problem, some offline RL methods are also proposed, including BCQ \cite{fujimoto2019benchmarking}, CQL \cite{kumar2020conservative}, COMBO \cite{yu2021combo}, etc. 
BCQ addresses the problem of extrapolation error via restricting the action space to force the agent towards behaving close to on-policy with respect to a subset of the given data. 
In addition, REM \cite{agarwal2020optimistic} enforces optimal Bellman consistency on random convex combinations of multiple Q-value estimates to enhance the generalization capability in the offline setting.
In the experiments of this paper, we choose three popular methods (DDQN, BCQ, and REM) as base models. Essentially, our proposed RL-MPCA only modifies the Q-network, so it can also apply to other Q-learning methods.

In addition, we can also consider the Weakly Coupled MDP problem as a black-box optimization problem, then solve it with evolutionary algorithms, such as Cross-Entropy Method (CEM) \cite{rubinstein2004cross} and Natural Evolution Strategies (NES) \cite{wierstra2014natural}.


\section{Problem Formulation}
\subsection{Original Problem Description}
The recent work \cite{jiang2020dcaf} formulated the single-phase CR allocation problem as a knapsack problem. Similarly, we formulate the multi-phase CR allocation problem as a knapsack problem.

\begin{align}
  \max_{j_1, \dots, j_T}\sum_{i=1}^M \sum_{j_1} \dots \sum_{j_T}\left( \prod_{t=1}^T x_{i,j_t} \right)Value_{i,j_i,\dots,j_T} \\
  s.t. \sum_{i=1}^M \sum_{j_1} \dots \sum_{j_T}\left(\prod_{t=1}^T x_{i,j_t} \right)Cost_{i,j_i,\dots,j_T} \leq C \label{formula:single-constraint} \\
  \sum_{j_t}x_{i,j_t} \leq 1, \ \ \forall i,t \\
  x_{i,j_t} \in \{0, 1\}, \ \ \forall i,j_t
\end{align}

We suppose there are $M$ online requests $\{i=1,\dots,M\}$ in a given time slice, and the maximum computation budget of the system in this time slice is $C$. For each request $i$, $T$ phases need to make computation decisions, and $N_t$ actions can be taken for the specified phase $t$. We define $j_1,\dots,j_T$ as a complete decision process of a request, and the decision action of phase $t$ is $j_t$ ($j_t \in \{1,\dots,N_t\}$). Meanwhile, for request $i$, if the decision process is $j_1,\dots,j_T$, we use $Value_{i,j_i,\dots,j_T}$ and $Cost_{i,j_i,\dots,j_T}$ to represent the expected revenue and computation cost, respectively. $x_{i,j_t}$ is the indicator that request $i$ is assigned action $j$ in phase $t$. In phase $t$, for request $i$, there is one and only one action $j_t$ can be taken.

InEq. (\ref{formula:single-constraint}) above assumes that all phases share an overall computation budget. However, in a real-world online recommender system, the CRs of each phase are often relatively independent. For example, recommender systems often deploy prediction and retrieval services on different clusters for ease of maintenance, and their CRs cannot be shared. 
Considering that each phase has a separate CRs budget, we replace the global constraint (InEq. (\ref{formula:single-constraint})) with multiple constraints InEq. (\ref{formula:multi-constraint}), where $Cost_{i,j_t}$ represents the computation cost when the decision of phase $t$ is $j_t$ for request $i$, and $C_t$ is the computation budget of phase $t$.
This paper focuses on the scenario of single-constraint CR allocation at each phase. If there is more than one constraint per phase, we can relax multiple constraints in the same phase and combine them into one.
\begin{align}
  s.t. \ \   \sum_{i=1}^M \sum_{j_t=1}^{N_t} x_{i,j_t} Cost_{i,j_t} \leq C_t, \ \ \forall t=1,\dots,T \label{formula:multi-constraint}
\end{align}

\subsection{Weakly Coupled MDP Problem Formulation \label{sub:weakly-coupled-mdp-problem-formulation}}
The decision results before phase $t$ affect the input state of phase $t$.
To better describe our approach, we take a three-phase CR allocation situation as an example in this paper. It contains one Elastic Channel phase, one Elastic Queue phase, and one Elastic Model phase, which is a typical case of recommender system CR allocation.
As shown in Figure \ref{fig:request-query-procedure}, for request $i$, the decision result of Elastic Channel phase determines the real retrieval queue, and it directly affects the input state of Elastic Queue phase. 
Similarly, the decision result of Elastic Queue phase affects the input state of Elastic Model phase. Therefore, to better adapt to the state transition process, in the multi-phase joint CR allocation, we introduce the ``state'' of the request.
\begin{figure}[htbp]
  \centering
  \includegraphics[width=0.9\linewidth]{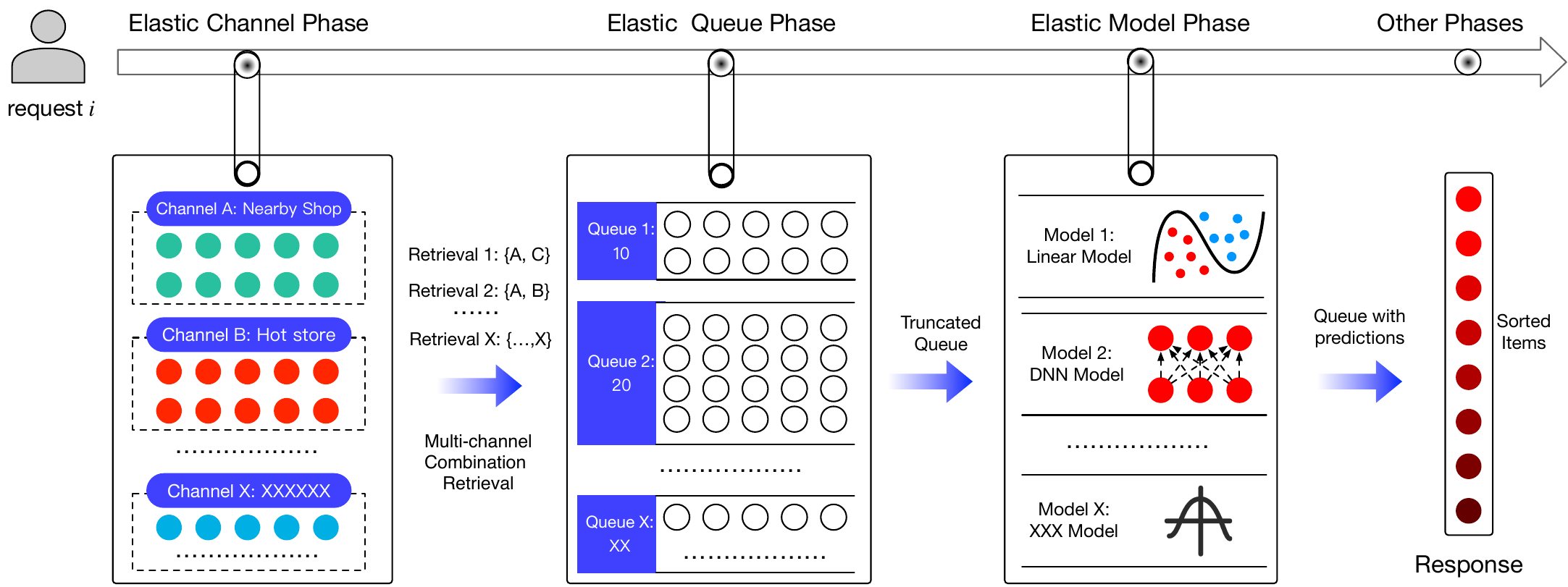}
  \caption{Request query procedure of recommender systems in a three-phase computation resource allocation situation.}
  \Description{}
  \label{fig:request-query-procedure}
\end{figure}


In this paper, we formulate the CR allocation problem as a Weakly Coupled MDP \cite{meuleau1998solving} problem. 
Formally, the Weakly Coupled MDP consists of a tuple of six elements ($\mathcal{S,A,R,P,\gamma,C}$), which are defined as follows:

\begin{itemize}
  \item \textbf{State Space $\mathcal{S}$}. For phase $t$ of request $i$, $s^i_t \in \mathcal{S}$ consists of user information $u$, time slice information $ts$, context information $c$, ad items information $\{ad_1, \dots, ad_{N_{ad}}\}$, and the CR allocation decision results of phase $t-1$.
  \item \textbf{Action Space $\mathcal{A}$}. Our CR allocation situation has three phases with different action spaces. The actions of the Elastic Channel phase, the Elastic Queue phase, and the Elastic Model phase are the retrieval strategy number, the truncation length, and the prediction model number, respectively.
  \item \textbf{Reward $\mathcal{R}$}. For request $i$, after the agent takes action for the final phase, the system returns the final sorted items to the user. The user browses the items and gives feedback, including order price $price_{o}$ and advertising fee $fee_{ad}$ of request $i$. The reward $r(s_t, a_t)$ is the weighted sum of them:
  \begin{align}
  r(s_t, a_t) = k_1 * fee_{ad} + k_2 * price_{o}
  \end{align}
  \item \textbf{Transition Probability $\mathcal{P}$}. $P(s_{t+1}|s_t, a_t)$ is the state transition probability from phase $t$ to phase $t+1$ after taking action $a_t$. 
  For each request $i$, trajectory ($\tau_i$) is its whole state transition process in the recommender system.
  \item \textbf{Discount Factor $\gamma$}. $\gamma \in [0,1]$ is the discount factor for future rewards. 
  \item \textbf{Global Constraint $\mathcal{C}$}. Each phase has its global constraint that couples sub-MDPs. InEq. (\ref{formula:multi-constraint}) defines these constraints. 
\end{itemize}

\section{Methodology}
\begin{figure}[htbp]
  \centering
  \includegraphics[width=0.9\linewidth]{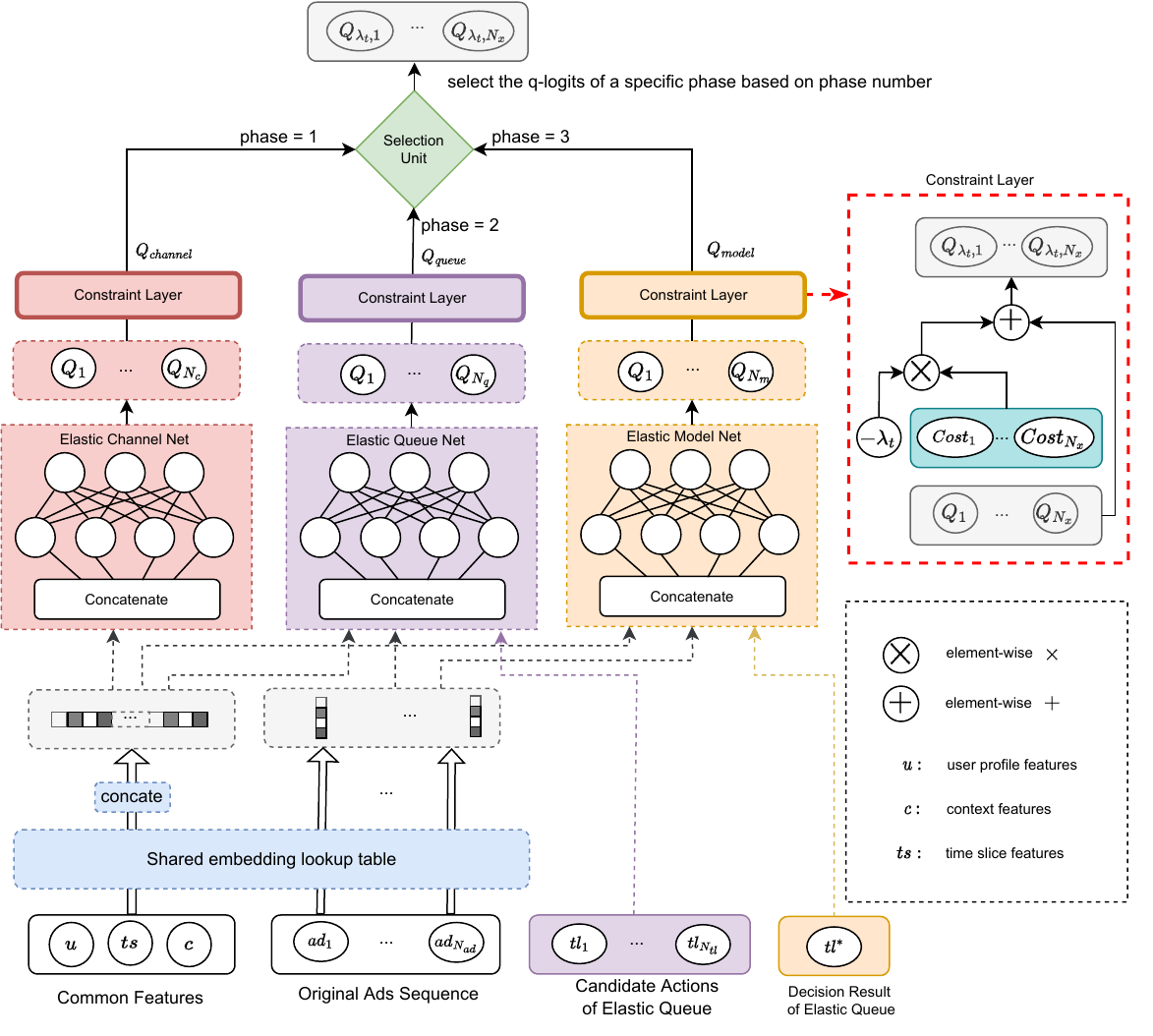}
  \caption{Q-Network of RL-MPCA. It first models each phase using a separate network and calibrates the Q-value with the constraint layer. Then the selection unit selects the q-logits of a specific phase based on the phase number $t$.}
  \Description{}
  \label{fig:Q-Network}
\end{figure}
Deep Q-Network (DQN) \cite{mnih2015human} and its improved versions \cite{wang2016dueling, van2016deep, fujimoto2019benchmarking} are very popular in solving sub-MDPs with discrete actions. The Q-network is the essential structure of these models, $Q_{\theta}(s, a)$. To adapt to various CR allocation scenarios, we design a novel deep Q-Network with multiple separate networks (As Figure \ref{fig:Q-Network} shows). In particular, the state space of each phase is defined as follows:

  $\bullet$  In the Elastic Channel phase, the candidate action space is the retrieval strategy numbers.
  For a recommender system with $N_r$ retrieval channels, the number of retrieval strategies is $N_c = 2^{N_{r}}$ and the candidate action space is $\{1,\dots,N_c\}$. 
  For example, for three candidate retrieval channels $\{A, B, C\}$, retrieval strategy $(0,1,1)$ indicates that channel A is not retrieved, and channels B and C are retrieved. We convert the indicator vector as a binary value, then the strategy number of the strategy $(0,1,1)$ is the integer $3$.

  $\bullet$  In the Elastic Queue phase, the action space is the truncation length. To reduce the candidate action space, we can put the candidate actions into buckets, e.g., set every ten adjacent truncation lengths as one bucket. Then the candidate action space is $\{10, 20,\dots\}$.
  
  $\bullet$  In the Elastic Model phase, the candidate action space is the prediction model numbers $\{1,\dots,N_m\}$.

Each phase of CR allocation has its own action spaces. As Figure \ref{fig:Q-Network} shows, we model each phase using separate networks to adapt the different action spaces. In the last layer of the Q-Network, we use the selection unit to select the q-logits of a specific phase based on phase number $t$.

\subsection{Constraint Layer \label{sec:constraint-layer}}
For any phase $t$, suppose that we have the optimal policy $\pi_{\neg t}^{*}$ that satisfies the constraints of all phases except $t$.
Then the decision problem for current phase $t$ can be modeled separately as the following single-phase CR allocation problem with a single constraint:
\begin{align}
  \max_{a_t}\sum_{i=1}^M \sum_{a_t=1}^{N_t} x_{i,a_t} {Value}_{i,a_t} \\
  s.t.  \sum_{i=1}^M \sum_{a_t=1}^{N_t} x_{i,a_t} {Cost}_{i,a_t} \leq C_t \\
   \sum_{a_t=1}^{N_t}x_{i,a_t} \leq 1, \ \ \forall i \\
   x_{i,a_t} \in \{0, 1\}, \ \ \forall i,a_t
\end{align}

By constructing and solving the Lagrange dual problem, we have the optimal solution to this problem. The proof is provided in Appendix \ref{sec:proof}.
For request $i$, the optimal action of phase $t$ is $a_t^*$:
\begin{align}
  a_t^* = \mathop{\arg \max}_{a_t}({Value}_{i,a_t} - \lambda_t Cost_{i, a_t}) \label{formula:optimal-action-for-value-cost}
\end{align}
where $\lambda_t \geq 0$ is the Lagrange multiplier.

Further, we use $Q^{\pi_{\neg t}^{*}}(s_t,a_t)$ to represent the expected cumulative reward for taking action $a_t$ in state $s_t$ and subsequent actions are decided following policy $\pi_{\neg t}^{*}$.
\begin{align}
  Q^{\pi_{\neg t}^{*}}(s_t,a_t) = \mathbb{E}_{\tau \sim \pi_{\neg t}^{*}} [R_t|s_t,a_t] \\
  R_t = \sum_{i=t+1}^{\infty} \gamma^{i} r(s_i, a_i, s_{i+1})
\end{align}

For phase $t$ of request $i$, we have:
\begin{align}
 Q^{\pi_{\neg t}^{*}}(s_t,a_t) = {Value}_{i,a_t} \\
 Cost(s_t,a_t) = {Cost}_{i,a_t}
\end{align}
where $Cost(s_t,a_t)$ is the computation cost for taking $a_t$ in $s_t$, determined by $(s_t, a_t)$, and independent of both prior and subsequent strategies. Thus, for phase $t$, the optimal action for request $i$ in state $s_t$ is $a_t^*$:
\begin{align}
  a_t^* = \mathop{\arg \max}_{a_t}(Q^{\pi_{\neg t}^{*}}(s_t,a_t) - \lambda_t Cost(s_t, a_t))
\end{align}

Compared to the action selection formula in original DQN \cite{mnih2015human} networks, we only need to add a layer (Constraint Layer) to obtain the optimal action that satisfies the CR constraints.
\begin{align}
Q_{\lambda_t}^{\pi_{\neg t}^{*}}(s_t,a_t) = Q^{\pi_{\neg t}^{*}}(s_t,a_t) - \lambda_t Cost(s_t, a_t)
\end{align}

Then the optimal action is:
\begin{align}
  a_t^* =\mathop{\arg \max}_{a_t} Q_{\lambda_t}^{\pi_{\neg t}^{*}}(s_t,a_t)
\end{align}

\subsubsection{Adaptive-$\lambda$ in Offline Model Training}
As mentioned in DCAF \cite{jiang2020dcaf} and CRAS \cite{yang2021computation}, Assumptions (\ref{assumption:1}) and (\ref{assumption:2}) usually hold in general recommender systems. 
\begin{assumption} \label{assumption:1}
  $Value_{i,a_t}$ is monotonically increasing with $Cost_{i,a_t}$.
\end{assumption}
\begin{assumption} \label{assumption:2}
  $\frac{Value_{i,a_t}}{Cost_{i,a_t}}$ is monotonically decreasing with $Cost_{i,a_t}$.
\end{assumption}
From our observations, they also hold for most requests in Meituan advertising system. However, it is worth noting that our assumptions differ from those of CRAS. CRAS uses the queue length to represent the computation cost, while we make no assumptions about the relationship between computation cost and queue length (or other actions).


Given a fixed $\{Value_{i,a_t}\}_{i=1}^M$ and variable $\lambda_t$, the optimal action of phase $t$ is $a_t^*$ (Eq. \ref{formula:optimal-action-for-value-cost}). 
For each request $i$, its optimal action $a_t^*$ varies with $\lambda_t$, then the total computation cost $\hat{C}_t(\lambda_t)$ (Eq. \ref{formula:total-cost-for-lambda}) and the total revenue  $\hat{R}_t(\lambda_t)$ (Eq. \ref{formula:total-revenue-for-lambda}) of phase $t$ vary with $\lambda_t$.
\begin{align}
  \hat{C}_t(\lambda_t) &= \sum_{i=1}^M {Cost}_{i,a_t^*} \label{formula:total-cost-for-lambda} \\
  \hat{R}_t(\lambda_t) &= \sum_{i=1}^M {Value}_{i,a_t^*} \label{formula:total-revenue-for-lambda}
\end{align}


We can obtain the optimal $\lambda_t$ which satisfies the CR constraint and maximizes $\hat{R}_t(\lambda_t)$ through updating $\lambda_t$ iteratively based on $\hat{C}_t(\lambda_t)$.
\begin{lemma} \label{lemma:1}
  Suppose Assumptions (\ref{assumption:1}) and (\ref{assumption:2}) hold, for any $\lambda_t^k$, let $\lambda_t^{k+1}$ be:  
  \begin{align}
    \lambda_t^{k+1} \leftarrow \lambda_t^{k} + \alpha \left( \frac{\hat{C}_t(\lambda_t^k)}{C_t} - 1 \right) \label{formula:update-lambda}
  \end{align}
  where $C_t$ is the computation budget of phase $t$ and $\alpha \in \mathbb{R^{+}}$ is learning rate of $\lambda$. Then, the following conclusion holds: 
  \begin{itemize}
    \item Conclusion 1. $\hat{C}_t(\lambda_t^{k+1}) \leq \hat{C}_t(\lambda_t^{k})$ will holds if $\hat{C}_t(\lambda_t^{k}) > C_t$.  \label{lemma:1.1}
    \item Conclusion 2. $\hat{R}_t(\lambda_t^{k+1}) \geq \hat{R}_t(\lambda_t^{k})$ will holds if $\hat{C}_t(\lambda_t^{k}) < C_t$.  \label{lemma:1.2}
    \item Conclusion 3. $\lambda_t^{k+1} = \lambda_t^{k}$ will holds if $\hat{C}_t(\lambda_t^{k}) = C_t$. \label{lemma:1.3}
  \end{itemize}
\end{lemma}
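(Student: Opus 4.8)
The plan is to reduce all three conclusions to a single monotonicity fact: the maps $\lambda_t \mapsto \hat{C}_t(\lambda_t)$ and $\lambda_t \mapsto \hat{R}_t(\lambda_t)$ are non-increasing (indeed piecewise-constant step functions of $\lambda_t$). Granting this, the three conclusions follow at once from the sign of the update increment in Eq.~(\ref{formula:update-lambda}). When $\hat{C}_t(\lambda_t^{k}) > C_t$, the bracket $\hat{C}_t(\lambda_t^{k})/C_t - 1$ is strictly positive, so $\lambda_t^{k+1} > \lambda_t^{k}$ since $\alpha > 0$, and monotonicity of $\hat{C}_t$ gives $\hat{C}_t(\lambda_t^{k+1}) \le \hat{C}_t(\lambda_t^{k})$, which is Conclusion~1. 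When $\hat{C}_t(\lambda_t^{k}) < C_t$, the bracket is strictly negative, so $\lambda_t^{k+1} < \lambda_t^{k}$, and monotonicity of $\hat{R}_t$ gives $\hat{R}_t(\lambda_t^{k+1}) \ge \hat{R}_t(\lambda_t^{k})$, which is Conclusion~2. When $\hat{C}_t(\lambda_t^{k}) = C_t$ the bracket vanishes, so $\lambda_t^{k+1} = \lambda_t^{k}$, which is Conclusion~3 and needs nothing further.

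The substance is therefore the monotonicity fact, which I would prove request by request via an exchange argument. Fix a request $i$ and two multiplier values $\lambda \le \lambda'$, and let $a$ and $a'$ be the corresponding optimal actions from Eq.~(\ref{formula:optimal-action-for-value-cost}), i.e. maximizers of $Value_{i,\cdot} - \lambda\, Cost_{i,\cdot}$ and of $Value_{i,\cdot} - \lambda'\, Cost_{i,\cdot}$ respectively. Optimality yields
\begin{align}
  Value_{i,a} - \lambda\, Cost_{i,a} &\ge Value_{i,a'} - \lambda\, Cost_{i,a'}, \\
  Value_{i,a'} - \lambda'\, Cost_{i,a'} &\ge Value_{i,a} - \lambda'\, Cost_{i,a},
\end{align}
and adding the two inequalities cancels the value terms, leaving $(\lambda' - \lambda)(Cost_{i,a} - Cost_{i,a'}) \ge 0$, hence $Cost_{i,a} \ge Cost_{i,a'}$. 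Summing over $i$ gives $\hat{C}_t(\lambda) \ge \hat{C}_t(\lambda')$. For the revenue, Assumption~(\ref{assumption:1}) says $Value_{i,\cdot}$ is monotone in $Cost_{i,\cdot}$, so the cost inequality upgrades to $Value_{i,a} \ge Value_{i,a'}$ for every $i$; summing gives $\hat{R}_t(\lambda) \ge \hat{R}_t(\lambda')$. This establishes both monotonicities on all of $\mathbb{R}$, so a projection of the iterate onto $[0,+\infty)$, if one is used, does not affect the argument.

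The step I expect to require the most care is the treatment of the $\arg\max$ when it is not single-valued, and, relatedly, pinning down exactly where the two assumptions enter. The exchange argument for $\hat{C}_t$ is insensitive to ties — it holds for any fixed selection of maximizers — and it does not require Assumptions~(\ref{assumption:1}) or~(\ref{assumption:2}). The passage from the cost inequality to the value inequality does invoke Assumption~(\ref{assumption:1}), and it tacitly needs that two actions of equal cost also have equal value (equivalently, that the tie-breaking in the $\arg\max$ is consistent with the cost ordering); Assumption~(\ref{assumption:2}) is what removes such ambiguity, making the optimal action essentially unique and the per-request cost a genuine decreasing step function of $\lambda_t$. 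I would therefore begin by stating the monotonicity lemma with an explicit tie-breaking convention, verify it is compatible with Eq.~(\ref{formula:optimal-action-for-value-cost}), and only then run the three-case argument; everything else — the sign analysis of Eq.~(\ref{formula:update-lambda}) and Conclusion~3 — is routine.
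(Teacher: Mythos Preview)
Your proposal is correct and follows the same two-step skeleton as the paper: first establish that $\hat{C}_t$ and $\hat{R}_t$ are non-increasing in $\lambda_t$, then read off the three conclusions from the sign of the update increment in Eq.~(\ref{formula:update-lambda}). The case analysis is identical in both.

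Where you differ is in how the monotonicity is justified. The paper is terse here: it cites \cite{jiang2020dcaf} for the fact that $\hat{C}_t$ is decreasing under Assumptions~(\ref{assumption:1}) and~(\ref{assumption:2}), and separately asserts that $\hat{R}_t$ is decreasing under Assumption~(\ref{assumption:1}) (with a side remark that Assumption~(\ref{assumption:2}) makes the ratio $\hat{R}_t/\hat{C}_t$ decreasing). You instead give a self-contained exchange argument at the per-request level. This is more elementary and also sharper on hypotheses: your argument shows $\hat{C}_t$ is non-increasing with \emph{no} assumptions at all, needs only Assumption~(\ref{assumption:1}) to lift the per-request cost inequality to a value inequality and hence get $\hat{R}_t$ non-increasing, and invokes Assumption~(\ref{assumption:2}) only to dissolve tie-breaking ambiguity in the $\arg\max$. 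The paper's route buys brevity by outsourcing to prior work; yours buys a cleaner dependency graph and avoids the external citation.
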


\begin{proof}
  Suppose Assumptions (\ref{assumption:1}) and (\ref{assumption:2}) hold, $\hat{C}_t(\lambda_t)$ is monotonically decreasing with $\lambda_t$ (see more details in \cite{jiang2020dcaf}). Further, under Assumption \ref{assumption:2}, $\frac{\hat{R}_t(\lambda_t)}{\hat{C}_t(\lambda_t)}$ is monotonically decreasing with $\lambda_t$, and under Assumption \ref{assumption:1}, $\hat{R}_t(\lambda_t)$ is monotonically decreasing with $\lambda_t$. 
  \begin{itemize}
    \item when $\hat{C}_t(\lambda_t^{k}) > C_t$, we have $\lambda_t^{k+1} > \lambda_t^{k}$, then $\hat{C}_t(\lambda_t^{k+1}) \leq \hat{C}_t(\lambda_t^{k})$ holds.
    \item when $\hat{C}_t(\lambda_t^{k}) < C_t$, we have $\lambda_t^{k+1} < \lambda_t^{k}$, then $\hat{R}_t(\lambda_t^{k+1}) \geq \hat{R}_t(\lambda_t^{k})$ holds.
    \item when $\hat{C}_t(\lambda_t^{k}) = C_t$, we have $\lambda_t^{k+1} = \lambda_t^{k}$ holds.
  \end{itemize}
\end{proof}

In summary, Lemma (\ref{lemma:1}) specifies that it is feasible to update $\lambda$ with formula (\ref{formula:update-lambda}). 
Initially, conclusion 1 of Lemma (\ref{lemma:1}) indicates that when the total computation cost exceeds the computation budget, updating $\lambda_t$ with formula (\ref{formula:update-lambda}) will obtain less total computation cost. It helps to avoid violating the constraint. 
Furthermore, conclusion 2 of Lemma (\ref{lemma:1}) indicates that when the total computation cost is less than computation budget, updating $\lambda_t$ with formula (\ref{formula:update-lambda}) will obtain a better total revenue. 
Finally, conclusion 3 of Lemma (\ref{lemma:1}) indicates that when the total computation cost equals to computation budget, updating $\lambda_t$ with formula (\ref{formula:update-lambda}) will obtain the original value of $\lambda_t$. 
Updating $\lambda_t$ with formula (\ref{formula:update-lambda}) until convergence, we will obtain the optimal $\lambda_t^*$, where $\hat{C}_t(\lambda_t^{*}) = C_t$.

\begin{algorithm}
	\renewcommand{\algorithmicrequire}{\textbf{Input:}}
	\renewcommand{\algorithmicensure}{\textbf{Output:}}
	\caption{Offline Training of RL-MPCA (Based on DDQN)}
	\label{alg:offline-training}
	\begin{algorithmic}[1]
		\REQUIRE Dataset $\mathcal{D}$, number of iteration $I$, mini-batch size $N$, adaptive-$\lambda$ update times $K$
    \STATE Initialize Q-network $Q_{\theta}$, target Q net $Q_{{\theta}^{'}}$ ($\theta^{'} \leftarrow \theta$), Lagrange multipliers $\boldsymbol{\lambda}=(\lambda_1,\dots,\lambda_T)$
		\FOR{$i = 1,\dots,I $}
		\STATE Sample batch $\mathcal{D}^{i}$ of $N$ transitions $(s_t, a_t, r_t, s_{t+1})$ from $\mathcal{D}$
    \STATE $a_{t+1} = \mathop{\arg \max}_{a_{t+1}} \left(Q_{\theta}(s_{t+1}, a_{t+1})-\lambda_{t+1}^{i} Cost(s_{t+1}, a_{t+1})\right)$
    \STATE $\theta \leftarrow \mathop{\arg \min}_{\theta} \sum_{\mathcal{D}^{i}} \left(r_t + \gamma Q_{\theta^{'}}(s_{t+1}, a_{t+1})-Q_{\theta}(s_t, a_t)\right)^2 $ 
      \FOR{$k = 1,\dots,K$}
      \STATE For $s_t \in \mathcal{D}^{i}$, take $a_t^k$ with (\ref{forumla:adaptive-lambda-action-taking})
      \STATE For $t \in \{1,\dots,T\}$, update $\lambda_t^{i,k+1}$ with (\ref{forumla:adaptive-lambda-update})
      \ENDFOR
    \STATE $\boldsymbol{\lambda}^{i+1} \leftarrow \boldsymbol{\lambda}^{i}$
		\STATE Every $N_{target}$ steps reset $\theta^{'} \leftarrow \theta$
		\ENDFOR
		\ENSURE $Q_{\theta}$, $\boldsymbol{\lambda}=(\lambda_1,\dots,\lambda_T)$
	\end{algorithmic}
\end{algorithm}

As described in Algorithm \ref{alg:offline-training}, we dynamically update the $\lambda$ in the offline training phase. At iteration step $i$, we take a mini-batch of samples $\mathcal{D}_i$ (a bigger batch is generally taken here, e.g., 8192 samples per batch), and update $\boldsymbol{\lambda} = (\lambda_1,\dots,\lambda_T)$ $K$ times. 
At the $k$-th update, for each $s_t$ in $\mathcal{D}_i$, take action $a_t^k$ with: 
\begin{align}
  a_t^k = \mathop{\arg \max}_{a_{t}} \left(Q_{\theta}(s_{t}, a_{t})-\lambda_{t}^{i,k} Cost(s_{t}, a_{t})\right) \label{forumla:adaptive-lambda-action-taking}
\end{align}

and for each phase $t \in \{1,\dots,T\}$ at the $k$-th update, update $\lambda_t^{i,k+1}$ with:
\begin{align}
  \lambda_t^{i,k+1} \leftarrow \max \left\{0, \lambda_t^{i,k} + \alpha \left(\frac{\sum_{(s_t,a_t^k) \in \mathcal{D}^{i,k}}{Cost}(s_t,a_t^k)}{{C}_t({\mathcal{D}^{i}})} - 1\right) \label{forumla:adaptive-lambda-update}\right\}
\end{align}

where $\alpha \in \mathbb{R^{+}}$ is the learning rate of adaptive-$\lambda$, and $C_t(\mathcal{D}_i)$ is the maximum CR budget that the system can allocate for dataset $\mathcal{D}_i$ at phase $t$. 
$C_t(\mathcal{D}_i)$ can be calculated through an offline fixed rule, which is designed by stress testing and practical experience.


Algorithm \ref{alg:offline-training} describes the training process of DDQN-based RL-MPCA. Essentially, the Constraint Layer module of RL-MPCA only modifies the Q-network, so it can also apply to other Q-learning methods. Take a popular offline RL method with Q-network, REM \cite{agarwal2020optimistic}, for example. The only difference between Algorithm \ref{alg:offline-training} and the REM-based RL-MPCA approach is Q-network $Q_{\theta}$. Specifically, for REM model with $H$ heads, we replace $Q_{\theta}$ with 
$Q_{\theta}^{REM} = \sum_{h} \beta_h Q_{\theta}^h$,
where $\boldsymbol{\beta} = (\beta_1,\dots,\beta_H)$ is categorical distribution, which is randomly drawed for each mini-batch (see more details in \cite{agarwal2020optimistic}).

\subsubsection{$\lambda$ Correction in Offline Model Evaluation \label{sec:lambda-correction}}
After the offline model is trained, the $\boldsymbol{\lambda}$-calibrated Q-value guarantees that the agent's decisions satisfy the CR constraints on all training datasets. 
However, when applying $\boldsymbol{\lambda}$ to the real online system, it still faces the following problems: 
(1) The online and offline data distributions are inconsistent because the behavioral policy of collecting offline data differs from the target policy, which leads to the possibility that $\boldsymbol{\lambda}$ may not satisfy the CR constraints in the online system. 
(2) The traffic of the recommender system varies over time, and the existing $\boldsymbol{\lambda}$ cannot satisfy the CR constraints on each time slice.

To solve problem (1), we build an offline simulation system, which interacts with the agent and gives feedback on the computation cost and revenue in imitation of the real online environment. Through the evaluation in the simulation system, we select the optimal $\boldsymbol{\lambda}^*$ that satisfies the CR constraints in order of the decision phases. When both Assumptions (\ref{assumption:1}) and (\ref{assumption:2}) hold, we can find optimal $\boldsymbol{\lambda}^*$ through bisection search in each phase (please refer to \cite{jiang2020dcaf} for the detailed proof). Otherwise, we can find optimal $\boldsymbol{\lambda}^*$ through grid search \cite{bergstra2012random}.

To solve problem (2), we select the optimal $\boldsymbol{\lambda}^*$ for each time slice based on the offline simulation system. We observe that the traffic of the recommender system generally varies periodically except for special holidays. Taking Meituan advertising system as an example, its traffic variation cycle is one day. 
Therefore, we can divide a day into multiple time slices with similar traffic distribution in the same time slice. Considering that the cost of training a separate model for each time slice is expensive and not easy to maintain, we first train a uniform model for all time slices, and then solve a separate $\boldsymbol{\lambda}$ for each time slice.
Alternatively, for systems with non-periodic traffic, a possible solution is to use the traffic from the previous time slice to represent the current time slice. Specifically, we can update $\boldsymbol{\lambda}$ in near real-time, thus allowing $\boldsymbol{\lambda}$ to automatically adapt to irregular traffic changes.
\subsection{System Architecture}
We illustrate the overview of the architecture in Figure \ref{fig:system-architecture}. In each phase, for instance, in the Elastic Queue phase, we need to allocate and control CRs through Computation Allocation System and Computation Control System.
Computation Allocation System aims to maximize the total business revenue under the CR constraints.
Computation Control System aims to guarantee system stability by means of feedback control. Dynamic allocation of CRs poses a significant challenge in guaranteeing the stability of recommender systems. We use Flink \cite{carbone2015apache} to collect real-time system load information, such as failure rate, CPU utilization, etc., and then use PID \cite{ang2005pid} control algorithms to achieve feedback control. When the system load exceeds the target value of the PID, the PID will control the consumption of CRs. For instance, in the Elastic Queue scenario, when the system's failure rate rises above the target value, the PID Controller will reduce the upper bound of the queue for all requests. The result of online A/B tests shows that the Computation Control System reduces the degradation rate by 0.1 percentage point, provides automatic and timely responses to unexpected traffic, and guarantees the stability of recommender systems.
\begin{figure}[htbp]
  \centering
  \includegraphics[width=0.95\linewidth]{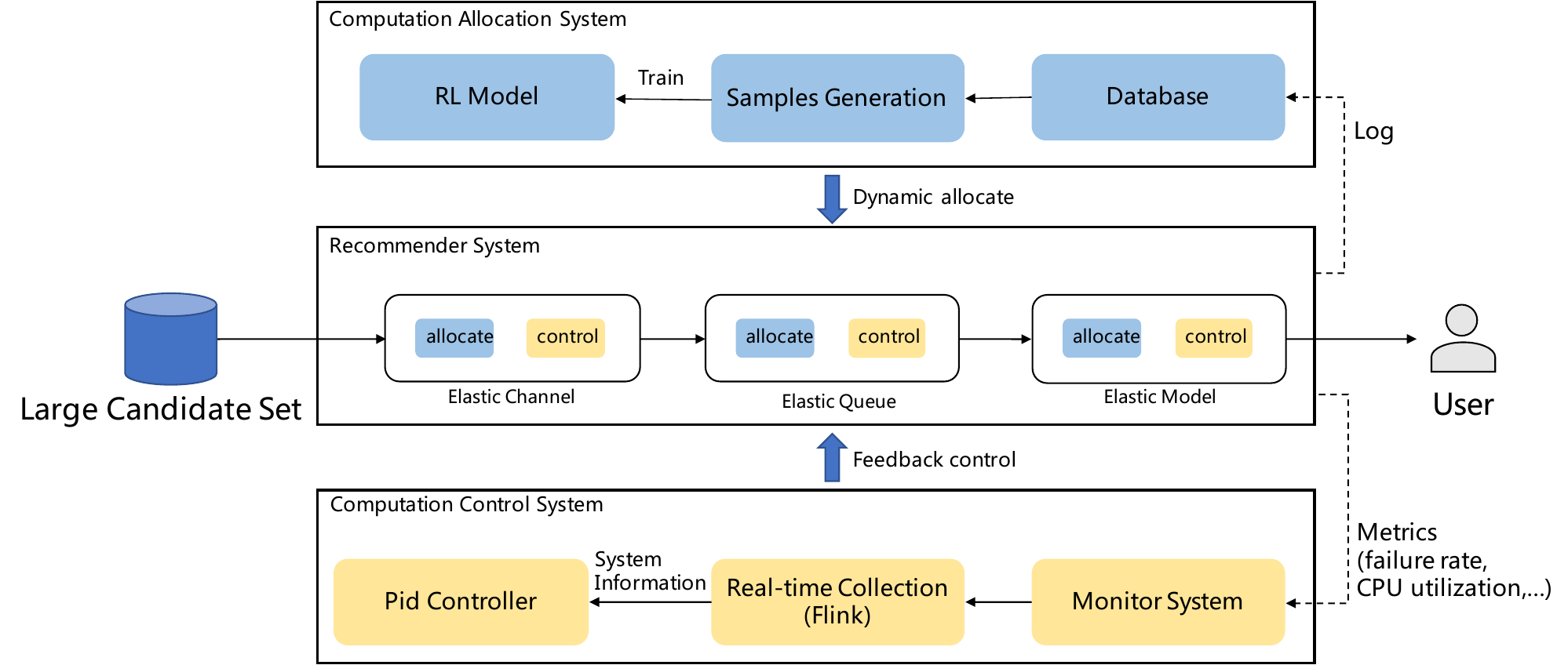}
  \caption{The Overview of System Architecture.}
  \Description{}
  \label{fig:system-architecture}
  \vspace{-3mm}
\end{figure}

\section{Experiments}
Our experiments aim to study four questions: 
(1) Does adaptive-$\lambda$ of constraint layer help to avoid violating the global CR constraints in the training process?
(2) After $\lambda$ correction, does the model with constraint layer satisfy the global CR constraints and improve business revenue?
(3) How does RL-MPCA approach perform in comparison to other state-of-the-art CR allocation approaches and RL algorithms?
(4) How do different hyper-parameter settings affect the performance of RL-MPCA?

To answer these questions, we conduct various experiments in a three-phase joint modeling CR allocation situation, which contains one Elastic Channel phase, one Elastic Queue phase, and one Elastic Model phase. 

\subsection{Offline Experiments}
To demonstrate the performance of the proposed RL-MPCA, we evaluate and compare various related approaches for CR allocation on a real-world dataset. In offline experiments, we use the simulation system to evaluate these approaches.

\subsubsection{Dataset}

We run random exploratory policies and superior policies (see more details about behavioral policies in Appendix \ref{sec:behavioral-policies}) to collect the dataset on Meituan advertising system during July and August 2022. Finally, we sample 568,842,204 requests from 101,368,290 users as the dataset, which includes user profile features, context features, time slice features, etc.
\subsubsection{Offline Simulation System \label{sec: offline-simulation-system}}
It is dangerous to deploy a model to an online system when its effect is unknown, which may significantly damage the online revenue of the recommender system and cause the online service to crash. To solve this problem, we build an offline simulation system, which can imitate the online real-world environment to interact with the model (agent) and give feedback on the computation consumption and revenue. 
More details about the offline simulation system are described in Appendix \ref{sec:simulation-system-appendix}.

\subsubsection{Evaluation Metrics}
We use computation ($cost$) and revenue ($return$) to evaluate the performance of approaches in offline experiments. The computation cost of each phase is defined as the sum of the CR consumption of all requests in that phase (see more details in Appendix \ref{sec:computation-cost-estimation}). To facilitate analysis, we define total computation cost as ($cost = \sum_t (\frac{\hat{C}_t}{C_t}-1)$).
$return$ is defined as the total revenue of all requests, specifically, $return = \sum fee_{ad} + \sum price_{o}$. 
With reference to D4RL \cite{fu2020d4rl}, to facilitate the analysis of the effectiveness of different approaches while ignoring the impact of our application scenarios, we normalize scores by: 
\begin{align}
  normalized\_score = 100 * \frac{score-random\_score}{expert\_score - random\_score} \label{formula:normalize_score}
\end{align}
\subsubsection{Hyper-parameters Settings}
RL-MPCA contains several hyper-parameters. We employed the grid search \cite{bergstra2012random} to determine the hyper-parameter values. Appendix \ref{sec:hyper-parameters} provides the hyper-parameters of experiments.

\subsubsection{Baselines}
We compare RL-MPCA with several baselines. 
Our situation has only one Elastic Queue phase (the two other phases are Elastic Channel and Elastic Model). In the situation containing only one Elastic Queue phase, the modeling methods of DCAF and CRAS are consistent. Therefore, in the later experiments, we only show the details of DCAF.
\begin{itemize}
  \item \textbf{Static}. Static approach allocates CRs with global fixed rules, including fixed retrieval channels, fixed truncation length of candidate items, and fixed prediction models.
  \item \textbf{DCAF}. DCAF \cite{jiang2020dcaf} formulates the CR allocation problem as an optimization problem with constraints, then solves the optimization problem with linear programming algorithms. In online A/B tests, we use fixed rules in Elastic Channel phase and Elastic Model phase, and DCAF is deployed in the Elastic Queue phase.
  \item \textbf{ES-MPCA}. Before RL-MPCA, we designed an evolutionary strategies based multi-phase computation allocation approach (ES-MPCA, see more details in Appendix \ref{sec:es-mpca}), which has been deployed on Meituan advertising system. 
  \item \textbf{Ex-RCPO}. RCPO \cite{tessler2018reward} solves a CMDP problem by introducing the penalized reward functions (i.e., calibrate rewards with Lagrange multiplier $\lambda$). We replace adaptive-$\lambda$ of RL-MPCA with the penalized reward functions when training the model, and name it Ex-RCPO.
  \item \textbf{Ex-BCORLE($\lambda$)}. BCORLE \cite{zhang2021bcorle} solves a single-constraint budget allocation problem with $\lambda$-generalization. It cannot be directly applied to the multi-constraint CR allocation. We extend BCORLE from single-$\lambda$ to multi-$\lambda$, and name it Ex-BCORLE.
  \item \textbf{Ex-BCRLSP}. BCRLSP \cite{chen2022bcrlsp} solves the single-constraint budget allocation problem by calibrating Q-value in near real-time. It cannot be directly applied to the multi-constraint CR allocation. We extend BCORLE from single-$\lambda$ to multi-$\lambda$, and name it Ex-BCORLE.
  \item \textbf{Ex-CrossDQN}. CrossDQN \cite{liao2022cross} solves a single-constraint ads allocation problem by introducing auxiliary batch-level loss when training the model. We replace adaptive-$\lambda$ of RL-MPCA with auxiliary batch-level loss when training the model, and name it Ex-CrossDQN.
\end{itemize}


\subsubsection{Offline Experiment Results} 

\begin{figure}[htbp]
  \centering
  \captionsetup[subfigure]{labelformat=empty} 
  \setlength{\abovecaptionskip}{0.cm}
  \subfigure[overall cost and return of the three phases] {
    \includegraphics[width=\linewidth, trim=100 50 100 50, clip]{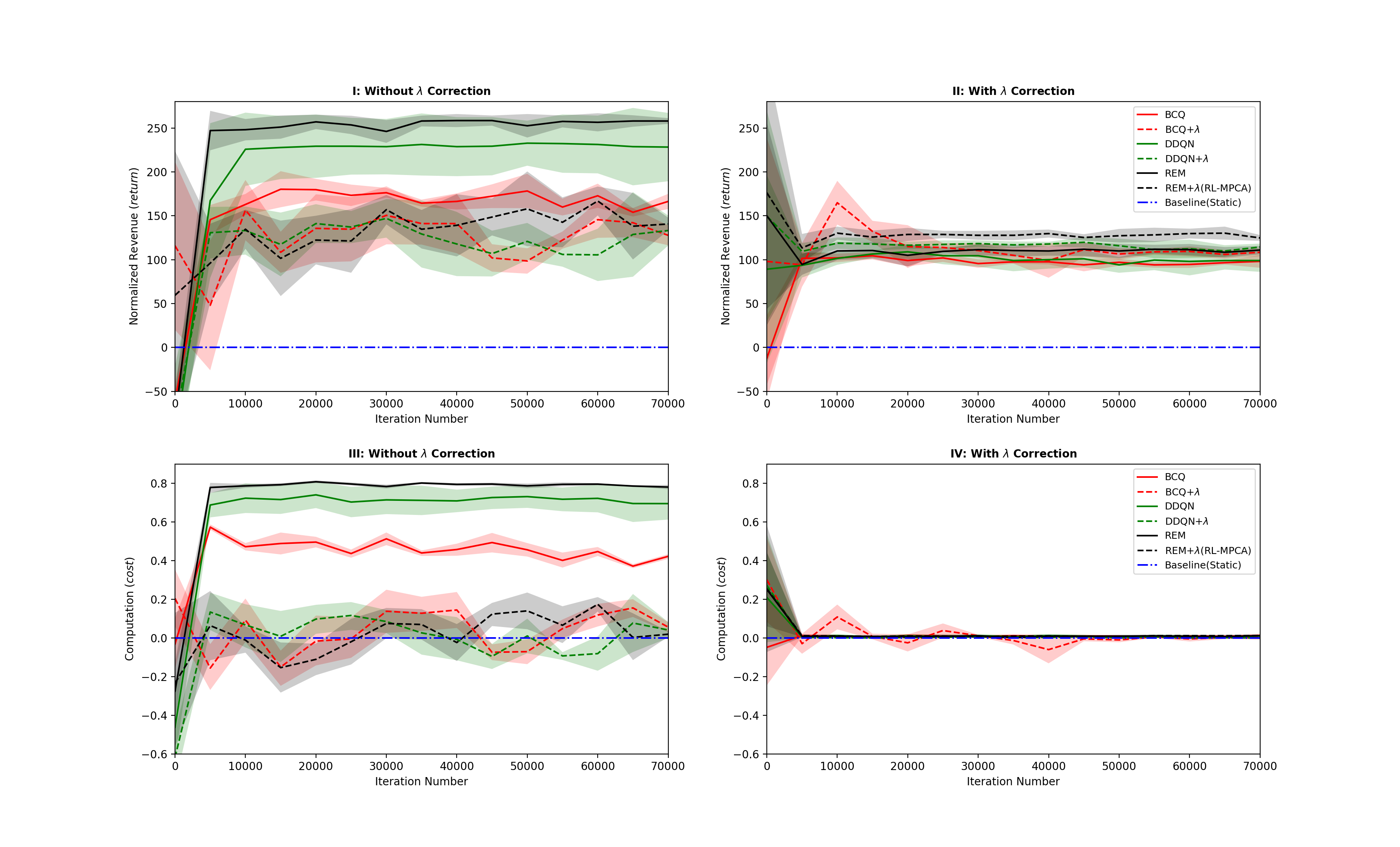}
  }

  \subfigure[cost at each phase without $\lambda$ correction] {
    \includegraphics[width=\linewidth, trim=100 0 100 25, clip]{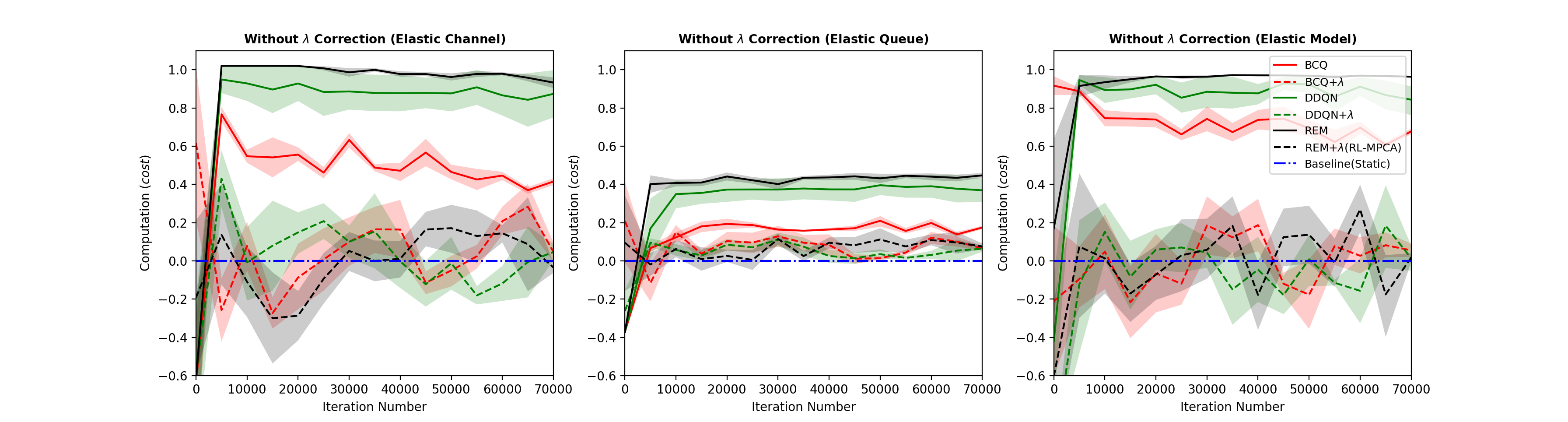}
  }
  \caption{Offline experiment results for adaptive-$\lambda$ and $\lambda$ correction on multiple Deep Q-Network models. Agents are evaluated every 5,000 steps, and averaged over 5 seeds.}
  \Description{}
  \label{fig:offline-exp-results-of-multiple-models-day}
\end{figure}

To answer question (1) and question (2), we train multiple models: DDQN, BCQ, REM, and their improved versions of introducing adaptive-$\lambda$, then use the simulation system to evaluate them.
As shown in Figure \ref{fig:offline-exp-results-of-multiple-models-day}, during the training process, introducing adaptive-$\lambda$ can control the CRs of the model always around the target constraints for each phase (Figure \ref{fig:offline-exp-results-of-multiple-models-day}.a.\uppercase\expandafter{\romannumeral3} and Figure \ref{fig:offline-exp-results-of-multiple-models-day}.b). 
However, the CRs of models swing around the target constraints due to the inconsistent distribution of the mini-batch sampled during training and the evaluation dataset (see more details in Section \ref{sec:lambda-correction}). 
After the $\lambda$ correction, for each phase, the CRs of models strictly satisfy the constraints except for BCQ+$\lambda$ (BCQ with adaptive-$\lambda$), and Figure \ref{fig:offline-exp-results-of-multiple-models-day}.a.\uppercase\expandafter{\romannumeral4} shows the total CRs of all phases.
Adaptive-$\lambda$ allows the model to learn the Q-value under the case that CRs conform to the constraint (or in the near range of the constraints) at each phase. Thus, we can observe that the effectiveness of all three models improves after introducing adaptive-$\lambda$ (Figure \ref{fig:offline-exp-results-of-multiple-models-day}.a.\uppercase\expandafter{\romannumeral2}). 
An interesting phenomenon is that after introducing adaptive-$\lambda$, the CRs of BCQ instead cannot be stably calibrated to conform to the constraint. 
A potential reason is that the BCQ model contains an imitation component, which causes the BCQ model to imitate the behavioral strategy. As a result, the value of $\lambda$ changes in an unknown direction during the training process, and eventually, the Q value cannot be calibrated to satisfy the target constraints.

\begin{table*}[htbp]
  \centering
  \tabcolsep=0.3cm
  \begin{tabular}{lccccc}\toprule
     & \multicolumn{2}{c}{Before Calibration} & \multicolumn{3}{c}{After Calibration}
     \\\cmidrule(lr){2-3}\cmidrule(lr){4-6}
              & $cost$  & $return$   & ConstraintSat & $cost$ & $return$ \\
     \midrule
     Static                  & 100\% & - & Yes & 100\% & - \\
     DCAF                   & - & - & Yes & $100(\pm0.5)\%$ & $52.7(\pm0.4)$ \\
     ES-MPCA                & - & - & Yes & $100(\pm0.5)\%$ & $77.8(\pm0.1)$ \\
     Ex-RCPO                & $109.7(\pm20.0)\%$ & $146.9(\pm65.4)$ & Yes & $100(\pm0.5)\%$ & $111.8(\pm8.4)$ \\
     Ex-BCORLE($\lambda$)   & - & - & Yes & $100(\pm0.5)\%$ & $74.2(\pm36.1)$ \\
     Ex-CrossDQN            & $97.4(\pm5.6)\%$ & $98.4(\pm32.8)$ & Yes & $100(\pm0.5)\%$ & $112.0(\pm10.3)$ \\
     DDQN                   & $172.6(\pm12.5)\%$ &  $227.6(\pm16.2)$ & Yes & $100(\pm0.5)\%$ & $100.0(\pm12.9)$ \\
     BCQ                    & $146.4(\pm7.6)\%$ &  $169.1(\pm16.2)$ & Yes & $100(\pm0.5)\%$ & $97.1(\pm5.5)$ \\
     REM(Ex-BCRLSP)         & $179.2(\pm2.0)\%$ &  $254.1(\pm11.8)$ & Yes & $100(\pm0.5)\%$ & $108.9(\pm9.0)$ \\
     DDQN+$\lambda$         & $102.7(\pm17.8)\%$ &  $125.3(\pm31.8)$ & Yes & $100(\pm0.5)\%$ & $115.4(\pm7.4)$ \\
     BCQ+$\lambda$          & $101.5(\pm17.4)\%$ &  $119.3(\pm42.8)$ & No/Yes & -/$100(\pm0.5)\%$ & -/$108.7(\pm12.4)$ \\ 
     \textbf{REM+$\lambda$(RL-MPCA)} & $103.4(\pm18.3)\%$ &  $135.4(\pm37.0)$ & Yes & $100(\pm0.5)\%$ & $\boldsymbol{126.2}(\pm8.7)$
     \\\bottomrule
  \end{tabular}
  \caption{The offline results in the simulation system. All numbers of $return$ are the normalized score calculated by Eq. (\ref{formula:normalize_score}), where $random\_score$ and $expert\_score$ are the scores of Static and DDQN.}
  \label{tab:offline-experiments-results}
  \vspace{-3mm}
 \end{table*}

To answer question (3), we compare RL-MPCA to the state-of-the-art CR allocation approach DCAF and other related approaches. The results are shown in Table \ref{tab:offline-experiments-results}. Experiment results show that RL-MPCA outperforms other approaches in $return$ when the CR constraints are satisfied.

\subsubsection{Hyper-parameter Analysis}
To answer question (4), We compare the effect of two critical parameters, $\alpha$ and $K$, on the performance of RL-MPCA. $\alpha$ is the learning rate of adaptive-$\lambda$. $K$ is $\lambda$ update times in one global step.

\textbf{Hyper-parameter $\alpha$}. Like the learnng rate of the model's common parameters, the learning rate of adaptive-$\lambda$ $\alpha$ cannot be too big or too small. Too small a learning rate will lead to slow learning, while too big a learning rate will cause $\lambda$ to swing around the optimal value. Table \ref{tab:hyper-parameters} shows the model performance at different learning rates, and we finally choose $0.1$ as the parameter value.

\textbf{Hyper-parameter $K$}. A bigger $K$ indicates more updates to the $\lambda$ at once update of the model parameter during training, which will make constraints easier to be satisfied. As seen in Table \ref{tab:hyper-parameters}, the $return$ increases with the increase of $K$. However, a bigger $K$ also means more time consumption for training. To trade off the revenue and time, we choose $10$ as the parameter value.

 \begin{table}[htbp]
  \centering
  \begin{tabular}{cc|ccc}\toprule
    \multicolumn{2}{c}{$\alpha$} & \multicolumn{3}{c}{$K$}
     \\\cmidrule(lr){1-2}\cmidrule(lr){3-5}
     $\alpha$ & $return$ & $K$ & $return$ & training-time \\
     \midrule
      0.001   & $117.2(\pm6.5)$ & 1 & $118.8(\pm14.1)$ & 100\% \\
      0.01    & $122.9(\pm13.2)$ & 5 & $120.5(\pm10.4)$ & 115\% \\
      0.05    & $123.8(\pm8.1)$ & \textbf{10} & $\boldsymbol{126.2}(\pm8.7)$ & 153\% \\
      \textbf{0.1} & $\boldsymbol{126.2}(\pm8.7)$ & 15 & $\boldsymbol{126.3}(\pm10.7)$ & 200\% \\
      0.5     & $122.1(\pm10.2)$ & 20 & $\boldsymbol{125.4}(\pm10.2)$ & 231\% \\
      1.0     & $113.0(\pm7.1)$  & 30 & $\boldsymbol{127.1}(\pm11.1)$ & 253\% \\
     \bottomrule
  \end{tabular}
  \caption{Experiment results of hyper-parameters $\alpha$ and $K$.}
  \label{tab:hyper-parameters}
  \vspace{-3mm}
 \end{table}

\subsection{Online A/B test Results}
We also evaluate the RL-MPCA approach for two weeks in the online environment. 
In online A/B tests, we compare our proposed RL-MPCA approach with several previous strategies deployed on Meituan advertising system. 
Table \ref{tab:online-experiments-results} lists the performance of several primary online metrics, including gross merchandise volume per mille (GPM, i.e., $GPM = \text{avg}(price_{o}) * 1000$, where $\text{avg}(price_{o})$ is the average of $price_{o}$), cost per mille (CPM, i.e., $CPM = \text{avg}(fee_{ad}) * 1000$, where $\text{avg}(fee_{ad})$ is the average of $fee_{ad}$), click-through rate (CTR), and post-click conversion rate (CVR).
RL-MPCA outperforms all other approaches, and ES-MPCA and DCAF take second and third place, respectively.


 \begin{table}[htbp]
  \centering
  \begin{tabular}{lccccc}\toprule
              & $cost$ & $GPM$ & $CPM$ & $CTR$ & $CVR$ \\
              \midrule
     Static    & +0.00\% & +0.00\% & +0.00\% & +0.00\% & +0.00\% \\
     DCAF     & -0.77\% & +1.38\% & 0.01\% & +0.26\% & +0.53\% \\
     ES-MPCA      & -0.3\% & +2.25\% & +0.12\% & +0.83\% & +0.89\% \\
     \textbf{RL-MPCA}  & -1.5\% & \textbf{+3.68\%} & \textbf{+0.90\%} & \textbf{+1.09\%} & \textbf{+2.86\%}
     \\\bottomrule
  \end{tabular}
  \caption{The online A/B test results.}
  \label{tab:online-experiments-results}
 \end{table}

\section{Conclusion and Future Work}
This paper proposes a Reinforcement Learning based Multi-Phase Computation Allocation approach, RL-MPCA, for recommender systems. RL-MPCA creatively formulates the computation resource (CR) allocation problem as a Weakly Coupled MDP problem and solves it with an RL-based approach. 
Besides, RL-MPCA designs a novel multi-scenario compatible Q-network adapting to various CR allocation scenarios, and calibrates Q-value by introducing multiple adaptive Lagrange multipliers (adaptive-$\lambda$) to avoid violating the global CR constraints when maximizing the business revenue. Both offline experiments and online A/B tests validate the effectiveness of our proposed RL-MPCA approach.

In future work, we plan to explore more general CR allocation approaches and more CR allocation application scenarios. 
Moreover, we plan to explore a new simulation scheme to capture the stochastic variation of response time and system load and then jointly model the response time constraint and the CR constraint to improve the system's availability. 
%


\bibliographystyle{ACM-Reference-Format}
\bibliography{RL-MPCA}


\begin{thebibliography}{44}


\ifx \showCODEN    \undefined \def \showCODEN     #1{\unskip}     \fi
\ifx \showDOI      \undefined \def \showDOI       #1{#1}\fi
\ifx \showISBNx    \undefined \def \showISBNx     #1{\unskip}     \fi
\ifx \showISBNxiii \undefined \def \showISBNxiii  #1{\unskip}     \fi
\ifx \showISSN     \undefined \def \showISSN      #1{\unskip}     \fi
\ifx \showLCCN     \undefined \def \showLCCN      #1{\unskip}     \fi
\ifx \shownote     \undefined \def \shownote      #1{#1}          \fi
\ifx \showarticletitle \undefined \def \showarticletitle #1{#1}   \fi
\ifx \showURL      \undefined \def \showURL       {\relax}        \fi
\providecommand\bibfield[2]{#2}
\providecommand\bibinfo[2]{#2}
\providecommand\natexlab[1]{#1}
\providecommand\showeprint[2][]{arXiv:#2}

\bibitem[Achiam et~al\mbox{.}(2017)]%
        {achiam2017constrained}
\bibfield{author}{\bibinfo{person}{Joshua Achiam}, \bibinfo{person}{David
  Held}, \bibinfo{person}{Aviv Tamar}, {and} \bibinfo{person}{Pieter Abbeel}.}
  \bibinfo{year}{2017}\natexlab{}.
\newblock \showarticletitle{Constrained policy optimization}. In
  \bibinfo{booktitle}{\emph{International conference on machine learning}}.
  PMLR, \bibinfo{pages}{22--31}.
\newblock


\bibitem[Adelman and Mersereau(2008)]%
        {adelman2008relaxations}
\bibfield{author}{\bibinfo{person}{Daniel Adelman} {and}
  \bibinfo{person}{Adam~J Mersereau}.} \bibinfo{year}{2008}\natexlab{}.
\newblock \showarticletitle{Relaxations of weakly coupled stochastic dynamic
  programs}.
\newblock \bibinfo{journal}{\emph{Operations Research}} \bibinfo{volume}{56},
  \bibinfo{number}{3} (\bibinfo{year}{2008}), \bibinfo{pages}{712--727}.
\newblock


\bibitem[Agarwal et~al\mbox{.}(2020)]%
        {agarwal2020optimistic}
\bibfield{author}{\bibinfo{person}{Rishabh Agarwal}, \bibinfo{person}{Dale
  Schuurmans}, {and} \bibinfo{person}{Mohammad Norouzi}.}
  \bibinfo{year}{2020}\natexlab{}.
\newblock \showarticletitle{An optimistic perspective on offline reinforcement
  learning}. In \bibinfo{booktitle}{\emph{International Conference on Machine
  Learning}}. PMLR, \bibinfo{pages}{104--114}.
\newblock


\bibitem[Altman(1999)]%
        {altman1999constrained}
\bibfield{author}{\bibinfo{person}{Eitan Altman}.}
  \bibinfo{year}{1999}\natexlab{}.
\newblock \bibinfo{booktitle}{\emph{Constrained Markov decision processes}}.
\newblock \bibinfo{publisher}{Routledge}.
\newblock


\bibitem[Ang et~al\mbox{.}(2005)]%
        {ang2005pid}
\bibfield{author}{\bibinfo{person}{Kiam~Heong Ang}, \bibinfo{person}{Gregory
  Chong}, {and} \bibinfo{person}{Yun Li}.} \bibinfo{year}{2005}\natexlab{}.
\newblock \showarticletitle{PID control system analysis, design, and
  technology}.
\newblock \bibinfo{journal}{\emph{IEEE transactions on control systems
  technology}} \bibinfo{volume}{13}, \bibinfo{number}{4}
  (\bibinfo{year}{2005}), \bibinfo{pages}{559--576}.
\newblock


\bibitem[Bergstra and Bengio(2012)]%
        {bergstra2012random}
\bibfield{author}{\bibinfo{person}{James Bergstra} {and}
  \bibinfo{person}{Yoshua Bengio}.} \bibinfo{year}{2012}\natexlab{}.
\newblock \showarticletitle{Random search for hyper-parameter optimization.}
\newblock \bibinfo{journal}{\emph{Journal of machine learning research}}
  \bibinfo{volume}{13}, \bibinfo{number}{2} (\bibinfo{year}{2012}).
\newblock


\bibitem[Boutilier and Lu(2016)]%
        {boutilier2016budget}
\bibfield{author}{\bibinfo{person}{Craig Boutilier} {and}
  \bibinfo{person}{Tyler Lu}.} \bibinfo{year}{2016}\natexlab{}.
\newblock \showarticletitle{Budget allocation using weakly coupled, constrained
  Markov decision processes}.
\newblock  (\bibinfo{year}{2016}).
\newblock


\bibitem[Carbone et~al\mbox{.}(2015)]%
        {carbone2015apache}
\bibfield{author}{\bibinfo{person}{Paris Carbone}, \bibinfo{person}{Asterios
  Katsifodimos}, \bibinfo{person}{Stephan Ewen}, \bibinfo{person}{Volker
  Markl}, \bibinfo{person}{Seif Haridi}, {and} \bibinfo{person}{Kostas
  Tzoumas}.} \bibinfo{year}{2015}\natexlab{}.
\newblock \showarticletitle{Apache flink: Stream and batch processing in a
  single engine}.
\newblock \bibinfo{journal}{\emph{Bulletin of the IEEE Computer Society
  Technical Committee on Data Engineering}} \bibinfo{volume}{36},
  \bibinfo{number}{4} (\bibinfo{year}{2015}).
\newblock


\bibitem[Chen et~al\mbox{.}(2022)]%
        {chen2022bcrlsp}
\bibfield{author}{\bibinfo{person}{Fanglin Chen}, \bibinfo{person}{Xiao Liu},
  \bibinfo{person}{Bo Tang}, \bibinfo{person}{Feiyu Xiong},
  \bibinfo{person}{Serim Hwang}, {and} \bibinfo{person}{Guomian Zhuang}.}
  \bibinfo{year}{2022}\natexlab{}.
\newblock \showarticletitle{BCRLSP: An Offline Reinforcement Learning Framework
  for Sequential Targeted Promotion}.
\newblock \bibinfo{journal}{\emph{arXiv preprint arXiv:2207.07790}}
  (\bibinfo{year}{2022}).
\newblock


\bibitem[Chen et~al\mbox{.}(2021)]%
        {chen2021primal}
\bibfield{author}{\bibinfo{person}{Yi Chen}, \bibinfo{person}{Jing Dong}, {and}
  \bibinfo{person}{Zhaoran Wang}.} \bibinfo{year}{2021}\natexlab{}.
\newblock \showarticletitle{A primal-dual approach to constrained markov
  decision processes}.
\newblock \bibinfo{journal}{\emph{arXiv preprint arXiv:2101.10895}}
  (\bibinfo{year}{2021}).
\newblock


\bibitem[Covington et~al\mbox{.}(2016)]%
        {covington2016deep}
\bibfield{author}{\bibinfo{person}{Paul Covington}, \bibinfo{person}{Jay
  Adams}, {and} \bibinfo{person}{Emre Sargin}.}
  \bibinfo{year}{2016}\natexlab{}.
\newblock \showarticletitle{Deep neural networks for youtube recommendations}.
  In \bibinfo{booktitle}{\emph{Proceedings of the 10th ACM conference on
  recommender systems}}. \bibinfo{pages}{191--198}.
\newblock


\bibitem[Deng et~al\mbox{.}(2021)]%
        {deng2021unified}
\bibfield{author}{\bibinfo{person}{Yang Deng}, \bibinfo{person}{Yaliang Li},
  \bibinfo{person}{Fei Sun}, \bibinfo{person}{Bolin Ding}, {and}
  \bibinfo{person}{Wai Lam}.} \bibinfo{year}{2021}\natexlab{}.
\newblock \showarticletitle{Unified conversational recommendation policy
  learning via graph-based reinforcement learning}. In
  \bibinfo{booktitle}{\emph{Proceedings of the 44th International ACM SIGIR
  Conference on Research and Development in Information Retrieval}}.
  \bibinfo{pages}{1431--1441}.
\newblock


\bibitem[Fan et~al\mbox{.}(2020)]%
        {fan2020training}
\bibfield{author}{\bibinfo{person}{Angela Fan}, \bibinfo{person}{Pierre Stock},
  \bibinfo{person}{Benjamin Graham}, \bibinfo{person}{Edouard Grave},
  \bibinfo{person}{R{\'e}mi Gribonval}, \bibinfo{person}{Herve Jegou}, {and}
  \bibinfo{person}{Armand Joulin}.} \bibinfo{year}{2020}\natexlab{}.
\newblock \showarticletitle{Training with quantization noise for extreme model
  compression}.
\newblock \bibinfo{journal}{\emph{arXiv preprint arXiv:2004.07320}}
  (\bibinfo{year}{2020}).
\newblock


\bibitem[Feng et~al\mbox{.}(2019)]%
        {feng2019deep}
\bibfield{author}{\bibinfo{person}{Yufei Feng}, \bibinfo{person}{Fuyu Lv},
  \bibinfo{person}{Weichen Shen}, \bibinfo{person}{Menghan Wang},
  \bibinfo{person}{Fei Sun}, \bibinfo{person}{Yu Zhu}, {and}
  \bibinfo{person}{Keping Yang}.} \bibinfo{year}{2019}\natexlab{}.
\newblock \showarticletitle{Deep session interest network for click-through
  rate prediction}.
\newblock \bibinfo{journal}{\emph{arXiv preprint arXiv:1905.06482}}
  (\bibinfo{year}{2019}).
\newblock


\bibitem[Fu et~al\mbox{.}(2020)]%
        {fu2020d4rl}
\bibfield{author}{\bibinfo{person}{Justin Fu}, \bibinfo{person}{Aviral Kumar},
  \bibinfo{person}{Ofir Nachum}, \bibinfo{person}{George Tucker}, {and}
  \bibinfo{person}{Sergey Levine}.} \bibinfo{year}{2020}\natexlab{}.
\newblock \showarticletitle{D4rl: Datasets for deep data-driven reinforcement
  learning}.
\newblock \bibinfo{journal}{\emph{arXiv preprint arXiv:2004.07219}}
  (\bibinfo{year}{2020}).
\newblock


\bibitem[Fujimoto et~al\mbox{.}(2019a)]%
        {fujimoto2019benchmarking}
\bibfield{author}{\bibinfo{person}{Scott Fujimoto}, \bibinfo{person}{Edoardo
  Conti}, \bibinfo{person}{Mohammad Ghavamzadeh}, {and} \bibinfo{person}{Joelle
  Pineau}.} \bibinfo{year}{2019}\natexlab{a}.
\newblock \showarticletitle{Benchmarking batch deep reinforcement learning
  algorithms}.
\newblock \bibinfo{journal}{\emph{arXiv preprint arXiv:1910.01708}}
  (\bibinfo{year}{2019}).
\newblock


\bibitem[Fujimoto et~al\mbox{.}(2019b)]%
        {fujimoto2019off}
\bibfield{author}{\bibinfo{person}{Scott Fujimoto}, \bibinfo{person}{David
  Meger}, {and} \bibinfo{person}{Doina Precup}.}
  \bibinfo{year}{2019}\natexlab{b}.
\newblock \showarticletitle{Off-policy deep reinforcement learning without
  exploration}. In \bibinfo{booktitle}{\emph{International conference on
  machine learning}}. PMLR, \bibinfo{pages}{2052--2062}.
\newblock


\bibitem[Gao et~al\mbox{.}(2019)]%
        {gao2019drcgr}
\bibfield{author}{\bibinfo{person}{Rong Gao}, \bibinfo{person}{Haifeng Xia},
  \bibinfo{person}{Jing Li}, \bibinfo{person}{Donghua Liu},
  \bibinfo{person}{Shuai Chen}, {and} \bibinfo{person}{Gang Chun}.}
  \bibinfo{year}{2019}\natexlab{}.
\newblock \showarticletitle{DRCGR: Deep reinforcement learning framework
  incorporating CNN and GAN-based for interactive recommendation}. In
  \bibinfo{booktitle}{\emph{2019 IEEE International Conference on Data Mining
  (ICDM)}}. IEEE, \bibinfo{pages}{1048--1053}.
\newblock


\bibitem[He et~al\mbox{.}(2021)]%
        {he2021unified}
\bibfield{author}{\bibinfo{person}{Yue He}, \bibinfo{person}{Xiujun Chen},
  \bibinfo{person}{Di Wu}, \bibinfo{person}{Junwei Pan}, \bibinfo{person}{Qing
  Tan}, \bibinfo{person}{Chuan Yu}, \bibinfo{person}{Jian Xu}, {and}
  \bibinfo{person}{Xiaoqiang Zhu}.} \bibinfo{year}{2021}\natexlab{}.
\newblock \showarticletitle{A Unified Solution to Constrained Bidding in Online
  Display Advertising}. In \bibinfo{booktitle}{\emph{Proceedings of the 27th
  ACM SIGKDD Conference on Knowledge Discovery \& Data Mining}}.
  \bibinfo{pages}{2993--3001}.
\newblock


\bibitem[Hussien et~al\mbox{.}(2021)]%
        {hussien2021recommendation}
\bibfield{author}{\bibinfo{person}{Farah Tawfiq~Abdul Hussien},
  \bibinfo{person}{Abdul Monem~S Rahma}, {and} \bibinfo{person}{Hala
  Bahjat~Abdul Wahab}.} \bibinfo{year}{2021}\natexlab{}.
\newblock \showarticletitle{Recommendation systems for e-commerce systems an
  overview}. In \bibinfo{booktitle}{\emph{Journal of Physics: Conference
  Series}}, Vol.~\bibinfo{volume}{1897}. IOP Publishing,
  \bibinfo{pages}{012024}.
\newblock


\bibitem[Jiang et~al\mbox{.}(2020)]%
        {jiang2020dcaf}
\bibfield{author}{\bibinfo{person}{Biye Jiang}, \bibinfo{person}{Pengye Zhang},
  \bibinfo{person}{Rihan Chen}, \bibinfo{person}{Xinchen Luo},
  \bibinfo{person}{Yin Yang}, \bibinfo{person}{Guan Wang},
  \bibinfo{person}{Guorui Zhou}, \bibinfo{person}{Xiaoqiang Zhu}, {and}
  \bibinfo{person}{Kun Gai}.} \bibinfo{year}{2020}\natexlab{}.
\newblock \showarticletitle{DCAF: A Dynamic computation resource allocation
  Framework for Online Serving System}.
\newblock \bibinfo{journal}{\emph{arXiv preprint arXiv:2006.09684}}
  (\bibinfo{year}{2020}).
\newblock


\bibitem[Kumar et~al\mbox{.}(2020)]%
        {kumar2020conservative}
\bibfield{author}{\bibinfo{person}{Aviral Kumar}, \bibinfo{person}{Aurick
  Zhou}, \bibinfo{person}{George Tucker}, {and} \bibinfo{person}{Sergey
  Levine}.} \bibinfo{year}{2020}\natexlab{}.
\newblock \showarticletitle{Conservative q-learning for offline reinforcement
  learning}.
\newblock \bibinfo{journal}{\emph{Advances in Neural Information Processing
  Systems}}  \bibinfo{volume}{33} (\bibinfo{year}{2020}),
  \bibinfo{pages}{1179--1191}.
\newblock


\bibitem[Liao et~al\mbox{.}(2022)]%
        {liao2022cross}
\bibfield{author}{\bibinfo{person}{Guogang Liao}, \bibinfo{person}{Ze Wang},
  \bibinfo{person}{Xiaoxu Wu}, \bibinfo{person}{Xiaowen Shi},
  \bibinfo{person}{Chuheng Zhang}, \bibinfo{person}{Yongkang Wang},
  \bibinfo{person}{Xingxing Wang}, {and} \bibinfo{person}{Dong Wang}.}
  \bibinfo{year}{2022}\natexlab{}.
\newblock \showarticletitle{Cross dqn: Cross deep q network for ads allocation
  in feed}. In \bibinfo{booktitle}{\emph{Proceedings of the ACM Web Conference
  2022}}. \bibinfo{pages}{401--409}.
\newblock


\bibitem[Liu et~al\mbox{.}(2017)]%
        {liu2017cascade}
\bibfield{author}{\bibinfo{person}{Shichen Liu}, \bibinfo{person}{Fei Xiao},
  \bibinfo{person}{Wenwu Ou}, {and} \bibinfo{person}{Luo Si}.}
  \bibinfo{year}{2017}\natexlab{}.
\newblock \showarticletitle{Cascade ranking for operational e-commerce search}.
  In \bibinfo{booktitle}{\emph{Proceedings of the 23rd ACM SIGKDD International
  Conference on Knowledge Discovery and Data Mining}}.
  \bibinfo{pages}{1557--1565}.
\newblock


\bibitem[Liu et~al\mbox{.}(2020)]%
        {liu2020ipo}
\bibfield{author}{\bibinfo{person}{Yongshuai Liu}, \bibinfo{person}{Jiaxin
  Ding}, {and} \bibinfo{person}{Xin Liu}.} \bibinfo{year}{2020}\natexlab{}.
\newblock \showarticletitle{IPO: Interior-point policy optimization under
  constraints}. In \bibinfo{booktitle}{\emph{Proceedings of the AAAI Conference
  on Artificial Intelligence}}, Vol.~\bibinfo{volume}{34}.
  \bibinfo{pages}{4940--4947}.
\newblock


\bibitem[Meuleau et~al\mbox{.}(1998)]%
        {meuleau1998solving}
\bibfield{author}{\bibinfo{person}{Nicolas Meuleau}, \bibinfo{person}{Milos
  Hauskrecht}, \bibinfo{person}{Kee-Eung Kim}, \bibinfo{person}{Leonid
  Peshkin}, \bibinfo{person}{Leslie~Pack Kaelbling}, \bibinfo{person}{Thomas~L
  Dean}, {and} \bibinfo{person}{Craig Boutilier}.}
  \bibinfo{year}{1998}\natexlab{}.
\newblock \showarticletitle{Solving very large weakly coupled Markov decision
  processes}. In \bibinfo{booktitle}{\emph{AAAI/IAAI}}.
  \bibinfo{pages}{165--172}.
\newblock


\bibitem[Mnih et~al\mbox{.}(2015)]%
        {mnih2015human}
\bibfield{author}{\bibinfo{person}{Volodymyr Mnih}, \bibinfo{person}{Koray
  Kavukcuoglu}, \bibinfo{person}{David Silver}, \bibinfo{person}{Andrei~A
  Rusu}, \bibinfo{person}{Joel Veness}, \bibinfo{person}{Marc~G Bellemare},
  \bibinfo{person}{Alex Graves}, \bibinfo{person}{Martin Riedmiller},
  \bibinfo{person}{Andreas~K Fidjeland}, \bibinfo{person}{Georg Ostrovski},
  {et~al\mbox{.}}} \bibinfo{year}{2015}\natexlab{}.
\newblock \showarticletitle{Human-level control through deep reinforcement
  learning}.
\newblock \bibinfo{journal}{\emph{nature}} \bibinfo{volume}{518},
  \bibinfo{number}{7540} (\bibinfo{year}{2015}), \bibinfo{pages}{529--533}.
\newblock


\bibitem[Polino et~al\mbox{.}(2018)]%
        {polino2018model}
\bibfield{author}{\bibinfo{person}{Antonio Polino}, \bibinfo{person}{Razvan
  Pascanu}, {and} \bibinfo{person}{Dan Alistarh}.}
  \bibinfo{year}{2018}\natexlab{}.
\newblock \showarticletitle{Model compression via distillation and
  quantization}.
\newblock \bibinfo{journal}{\emph{arXiv preprint arXiv:1802.05668}}
  (\bibinfo{year}{2018}).
\newblock


\bibitem[Rubinstein and Kroese(2004)]%
        {rubinstein2004cross}
\bibfield{author}{\bibinfo{person}{Reuven~Y Rubinstein} {and}
  \bibinfo{person}{Dirk~P Kroese}.} \bibinfo{year}{2004}\natexlab{}.
\newblock \bibinfo{booktitle}{\emph{The cross-entropy method: a unified
  approach to combinatorial optimization, Monte-Carlo simulation, and machine
  learning}}. Vol.~\bibinfo{volume}{133}.
\newblock \bibinfo{publisher}{Springer}.
\newblock


\bibitem[Tang et~al\mbox{.}(2020)]%
        {tang2020optimized}
\bibfield{author}{\bibinfo{person}{Pingzhong Tang}, \bibinfo{person}{Xun Wang},
  \bibinfo{person}{Zihe Wang}, \bibinfo{person}{Yadong Xu}, {and}
  \bibinfo{person}{Xiwang Yang}.} \bibinfo{year}{2020}\natexlab{}.
\newblock \showarticletitle{Optimized Cost per Mille in Feeds Advertising}. In
  \bibinfo{booktitle}{\emph{Proceedings of the 19th International Conference on
  Autonomous Agents and MultiAgent Systems}}. \bibinfo{pages}{1359--1367}.
\newblock


\bibitem[Tessler et~al\mbox{.}(2018)]%
        {tessler2018reward}
\bibfield{author}{\bibinfo{person}{Chen Tessler}, \bibinfo{person}{Daniel~J
  Mankowitz}, {and} \bibinfo{person}{Shie Mannor}.}
  \bibinfo{year}{2018}\natexlab{}.
\newblock \showarticletitle{Reward constrained policy optimization}.
\newblock \bibinfo{journal}{\emph{arXiv preprint arXiv:1805.11074}}
  (\bibinfo{year}{2018}).
\newblock


\bibitem[Van~Hasselt et~al\mbox{.}(2016)]%
        {van2016deep}
\bibfield{author}{\bibinfo{person}{Hado Van~Hasselt}, \bibinfo{person}{Arthur
  Guez}, {and} \bibinfo{person}{David Silver}.}
  \bibinfo{year}{2016}\natexlab{}.
\newblock \showarticletitle{Deep reinforcement learning with double
  q-learning}. In \bibinfo{booktitle}{\emph{Proceedings of the AAAI conference
  on artificial intelligence}}, Vol.~\bibinfo{volume}{30}.
\newblock


\bibitem[Wang et~al\mbox{.}(2016)]%
        {wang2016dueling}
\bibfield{author}{\bibinfo{person}{Ziyu Wang}, \bibinfo{person}{Tom Schaul},
  \bibinfo{person}{Matteo Hessel}, \bibinfo{person}{Hado Hasselt},
  \bibinfo{person}{Marc Lanctot}, {and} \bibinfo{person}{Nando Freitas}.}
  \bibinfo{year}{2016}\natexlab{}.
\newblock \showarticletitle{Dueling network architectures for deep
  reinforcement learning}. In \bibinfo{booktitle}{\emph{International
  conference on machine learning}}. PMLR, \bibinfo{pages}{1995--2003}.
\newblock


\bibitem[Wierstra et~al\mbox{.}(2014)]%
        {wierstra2014natural}
\bibfield{author}{\bibinfo{person}{Daan Wierstra}, \bibinfo{person}{Tom
  Schaul}, \bibinfo{person}{Tobias Glasmachers}, \bibinfo{person}{Yi Sun},
  \bibinfo{person}{Jan Peters}, {and} \bibinfo{person}{J{\"u}rgen
  Schmidhuber}.} \bibinfo{year}{2014}\natexlab{}.
\newblock \showarticletitle{Natural evolution strategies}.
\newblock \bibinfo{journal}{\emph{The Journal of Machine Learning Research}}
  \bibinfo{volume}{15}, \bibinfo{number}{1} (\bibinfo{year}{2014}),
  \bibinfo{pages}{949--980}.
\newblock


\bibitem[Wu et~al\mbox{.}(2018)]%
        {wu2018budget}
\bibfield{author}{\bibinfo{person}{Di Wu}, \bibinfo{person}{Xiujun Chen},
  \bibinfo{person}{Xun Yang}, \bibinfo{person}{Hao Wang}, \bibinfo{person}{Qing
  Tan}, \bibinfo{person}{Xiaoxun Zhang}, \bibinfo{person}{Jian Xu}, {and}
  \bibinfo{person}{Kun Gai}.} \bibinfo{year}{2018}\natexlab{}.
\newblock \showarticletitle{Budget constrained bidding by model-free
  reinforcement learning in display advertising}. In
  \bibinfo{booktitle}{\emph{Proceedings of the 27th ACM International
  Conference on Information and Knowledge Management}}.
  \bibinfo{pages}{1443--1451}.
\newblock


\bibitem[Xie et~al\mbox{.}(2021)]%
        {xie2021hierarchical}
\bibfield{author}{\bibinfo{person}{Ruobing Xie}, \bibinfo{person}{Shaoliang
  Zhang}, \bibinfo{person}{Rui Wang}, \bibinfo{person}{Feng Xia}, {and}
  \bibinfo{person}{Leyu Lin}.} \bibinfo{year}{2021}\natexlab{}.
\newblock \showarticletitle{Hierarchical reinforcement learning for integrated
  recommendation}. In \bibinfo{booktitle}{\emph{Proceedings of the AAAI
  Conference on Artificial Intelligence}}, Vol.~\bibinfo{volume}{35}.
  \bibinfo{pages}{4521--4528}.
\newblock


\bibitem[Yang et~al\mbox{.}(2020)]%
        {yang2020motiac}
\bibfield{author}{\bibinfo{person}{Chaoqi Yang}, \bibinfo{person}{Junwei Lu},
  \bibinfo{person}{Xiaofeng Gao}, \bibinfo{person}{Haishan Liu},
  \bibinfo{person}{Qiong Chen}, \bibinfo{person}{Gongshen Liu}, {and}
  \bibinfo{person}{Guihai Chen}.} \bibinfo{year}{2020}\natexlab{}.
\newblock \showarticletitle{MoTiAC: Multi-objective actor-critics for real-time
  bidding}.
\newblock \bibinfo{journal}{\emph{arXiv preprint arXiv:2002.07408}}
  (\bibinfo{year}{2020}).
\newblock


\bibitem[Yang et~al\mbox{.}(2021)]%
        {yang2021computation}
\bibfield{author}{\bibinfo{person}{Xun Yang}, \bibinfo{person}{Yunli Wang},
  \bibinfo{person}{Cheng Chen}, \bibinfo{person}{Qing Tan},
  \bibinfo{person}{Chuan Yu}, \bibinfo{person}{Jian Xu}, {and}
  \bibinfo{person}{Xiaoqiang Zhu}.} \bibinfo{year}{2021}\natexlab{}.
\newblock \showarticletitle{Computation Resource Allocation Solution in
  Recommender Systems}.
\newblock \bibinfo{journal}{\emph{arXiv preprint arXiv:2103.02259}}
  (\bibinfo{year}{2021}).
\newblock


\bibitem[Yu et~al\mbox{.}(2021)]%
        {yu2021combo}
\bibfield{author}{\bibinfo{person}{Tianhe Yu}, \bibinfo{person}{Aviral Kumar},
  \bibinfo{person}{Rafael Rafailov}, \bibinfo{person}{Aravind Rajeswaran},
  \bibinfo{person}{Sergey Levine}, {and} \bibinfo{person}{Chelsea Finn}.}
  \bibinfo{year}{2021}\natexlab{}.
\newblock \showarticletitle{Combo: Conservative offline model-based policy
  optimization}.
\newblock \bibinfo{journal}{\emph{Advances in neural information processing
  systems}}  \bibinfo{volume}{34} (\bibinfo{year}{2021}),
  \bibinfo{pages}{28954--28967}.
\newblock


\bibitem[Zhang et~al\mbox{.}(2021)]%
        {zhang2021bcorle}
\bibfield{author}{\bibinfo{person}{Yang Zhang}, \bibinfo{person}{Bo Tang},
  \bibinfo{person}{Qingyu Yang}, \bibinfo{person}{Dou An},
  \bibinfo{person}{Hongyin Tang}, \bibinfo{person}{Chenyang Xi},
  \bibinfo{person}{Xueying Li}, {and} \bibinfo{person}{Feiyu Xiong}.}
  \bibinfo{year}{2021}\natexlab{}.
\newblock \showarticletitle{BCORLE ($\lambda$): An Offline Reinforcement
  Learning and Evaluation Framework for Coupons Allocation in E-commerce
  Market}.
\newblock \bibinfo{journal}{\emph{Advances in Neural Information Processing
  Systems}}  \bibinfo{volume}{34} (\bibinfo{year}{2021}),
  \bibinfo{pages}{20410--20422}.
\newblock


\bibitem[Zhao et~al\mbox{.}(2021)]%
        {zhao2021dear}
\bibfield{author}{\bibinfo{person}{Xiangyu Zhao}, \bibinfo{person}{Changsheng
  Gu}, \bibinfo{person}{Haoshenglun Zhang}, \bibinfo{person}{Xiwang Yang},
  \bibinfo{person}{Xiaobing Liu}, \bibinfo{person}{Jiliang Tang}, {and}
  \bibinfo{person}{Hui Liu}.} \bibinfo{year}{2021}\natexlab{}.
\newblock \showarticletitle{Dear: Deep reinforcement learning for online
  advertising impression in recommender systems}. In
  \bibinfo{booktitle}{\emph{Proceedings of the AAAI Conference on Artificial
  Intelligence}}, Vol.~\bibinfo{volume}{35}. \bibinfo{pages}{750--758}.
\newblock


\bibitem[Zhao et~al\mbox{.}(2020)]%
        {zhao2020jointly}
\bibfield{author}{\bibinfo{person}{Xiangyu Zhao}, \bibinfo{person}{Xudong
  Zheng}, \bibinfo{person}{Xiwang Yang}, \bibinfo{person}{Xiaobing Liu}, {and}
  \bibinfo{person}{Jiliang Tang}.} \bibinfo{year}{2020}\natexlab{}.
\newblock \showarticletitle{Jointly learning to recommend and advertise}. In
  \bibinfo{booktitle}{\emph{Proceedings of the 26th ACM SIGKDD International
  Conference on Knowledge Discovery \& Data Mining}}.
  \bibinfo{pages}{3319--3327}.
\newblock


\bibitem[Zhou et~al\mbox{.}(2018)]%
        {zhou2018deep}
\bibfield{author}{\bibinfo{person}{Guorui Zhou}, \bibinfo{person}{Xiaoqiang
  Zhu}, \bibinfo{person}{Chenru Song}, \bibinfo{person}{Ying Fan},
  \bibinfo{person}{Han Zhu}, \bibinfo{person}{Xiao Ma},
  \bibinfo{person}{Yanghui Yan}, \bibinfo{person}{Junqi Jin},
  \bibinfo{person}{Han Li}, {and} \bibinfo{person}{Kun Gai}.}
  \bibinfo{year}{2018}\natexlab{}.
\newblock \showarticletitle{Deep interest network for click-through rate
  prediction}. In \bibinfo{booktitle}{\emph{Proceedings of the 24th ACM SIGKDD
  international conference on knowledge discovery \& data mining}}.
  \bibinfo{pages}{1059--1068}.
\newblock


\bibitem[Zhou et~al\mbox{.}(2020)]%
        {zhou2020interactive}
\bibfield{author}{\bibinfo{person}{Sijin Zhou}, \bibinfo{person}{Xinyi Dai},
  \bibinfo{person}{Haokun Chen}, \bibinfo{person}{Weinan Zhang},
  \bibinfo{person}{Kan Ren}, \bibinfo{person}{Ruiming Tang},
  \bibinfo{person}{Xiuqiang He}, {and} \bibinfo{person}{Yong Yu}.}
  \bibinfo{year}{2020}\natexlab{}.
\newblock \showarticletitle{Interactive recommender system via knowledge
  graph-enhanced reinforcement learning}. In
  \bibinfo{booktitle}{\emph{Proceedings of the 43rd international ACM SIGIR
  conference on research and development in information retrieval}}.
  \bibinfo{pages}{179--188}.
\newblock


\end{thebibliography}

\clearpage

\appendix

\section{Simulation System \label{sec:simulation-system-appendix}}
The offline simulation system contains two modules: the request simulation module and the revenue estimation module. For a given request, the request simulation module is responsible for interacting with an agent and generating interaction results. The revenue estimation module is a deep neural network model based on supervised learning, which evaluates the simulation results and predicts the user views, clicks, and purchases for each request.
Although the offline simulation system requires a lot of time and computation resources, the prediction results of the revenue estimation module are relatively accurate because the request simulation module can generate detailed information about the requests. 
Finally, after calibrating the output of the revenue estimation model, our offline simulation system can achieve fairly confident revenue estimation results. 

As Figure \ref{fig:simulation-system-structure} shows, for each request $i$, the interaction of the simulation system and the agent involves multiple steps.
\begin{figure}[htbp]
  \centering
  \vspace{-0.3cm}
  \includegraphics[width=0.9\linewidth ]{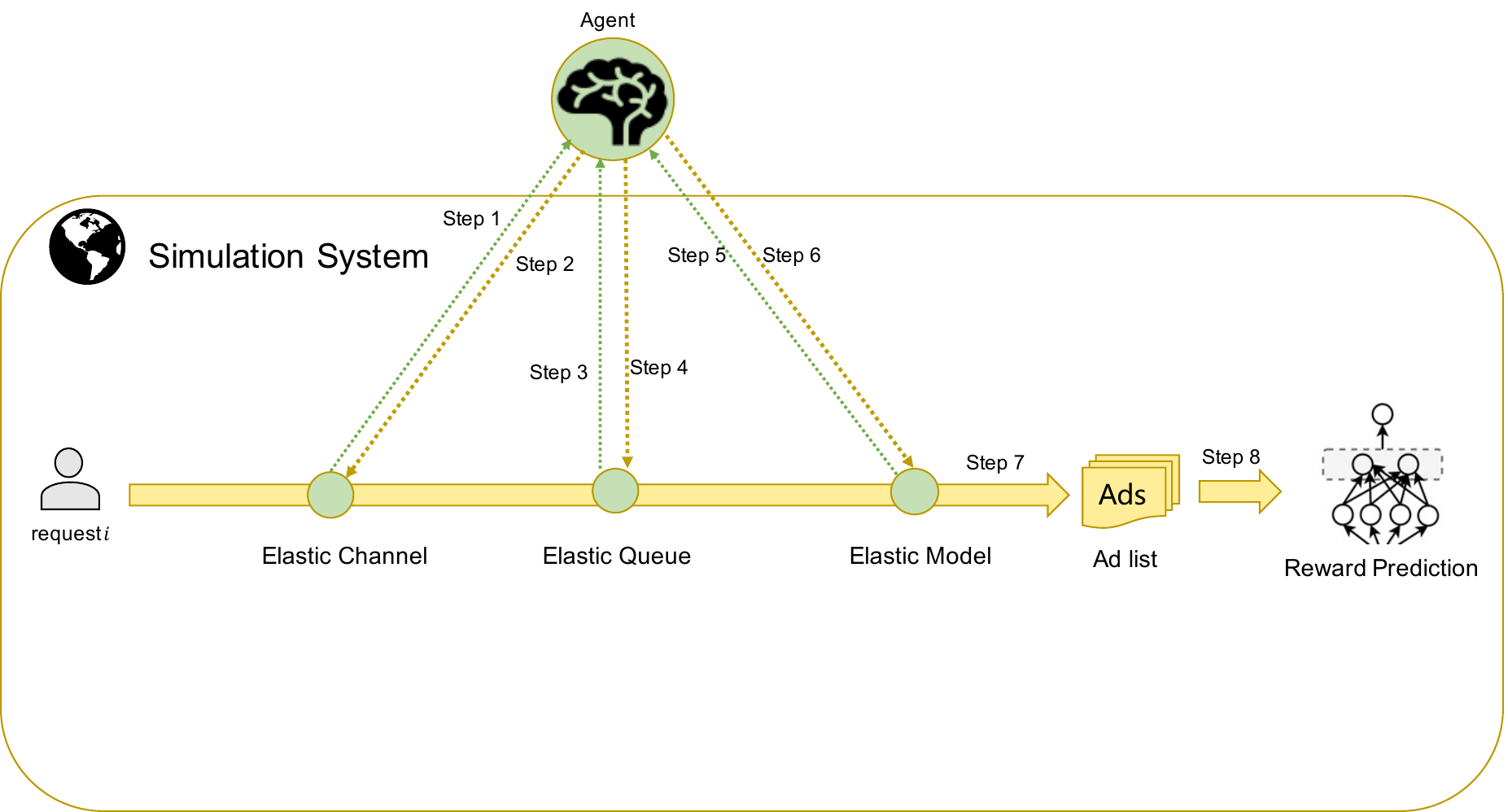}
  \caption{The structure of simulation system.}
  \Description{}
  \label{fig:simulation-system-structure}
\end{figure}
\begin{itemize}
  \item \textbf{Step 1}. The simulation system constructs and feeds the initial state $s_1^i$ to the agent.
  \item \textbf{Step 2}. The agent takes Elastic Channel action $a_1^i$ based on state $s_1^i$.
  \item \textbf{Step 3}. The simulation system retrieves the ads with action $a_1^i$, and feeds state $s_2^i$ (including the retrieval ad list) to the agent.
  \item \textbf{Step 4}. The agent takes Elastic Queue action $a_2^i$ based on state $s_2^i$.
  \item \textbf{Step 5}. The simulation system simulates the truncation operation with the truncation length corresponding to action $a_2^i$, and feeds state $s_3^i$ (including the truncated ad list) to the agent.
  \item \textbf{Step 6}. The agent takes Elastic Model action $a_3^i$ based on state $s_3^i$.
  \item \textbf{Step 7}. The simulation system provides the prediction service for ads with the prediction model corresponding to action $a_3^i$, and outputs state $s_4^i$ (including the truncated ad list and its prediction scores).
  \item \textbf{Step 8}. The simulation system takes state $s_4^i$ as input features, and predicts the final revenue (i.e., user views, clicks, and purchases) with a supervised learning based deep neural network model (see the architecture in Figure \ref{fig:revenue-prediction-model}). 
\end{itemize}
\begin{figure}[htbp]
  \centering
  \includegraphics[width=0.90\linewidth]{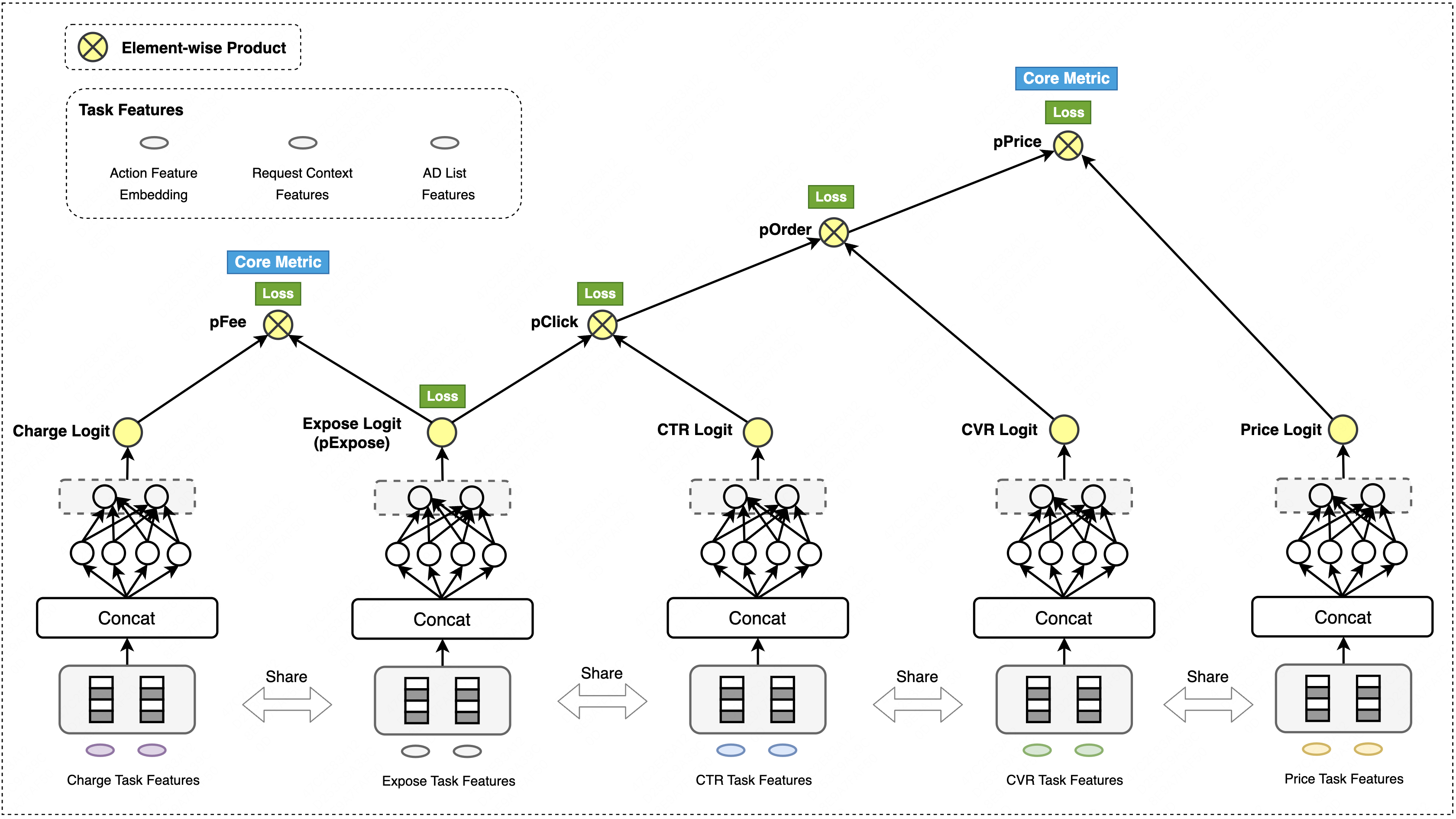}
  \caption{Architecture of revenue prediction model.}
  \Description{}
  \label{fig:revenue-prediction-model}
\end{figure}

\section{Proof \label{sec:proof}}
To slove the single-phase computation resource (CR) allocation problem in Section \ref{sec:constraint-layer}, we introduce a Lagrange multiplier $\lambda_t$, and construct the dual problem:
\begin{align}
  \begin{split}
  &\min_{\lambda_t} \max_{a_t}\sum_{i=1}^M \sum_{a_t=1}^{N_t}x_{i,a_t} {Value}_{i,a_t} \\ &\qquad \qquad \qquad \qquad - \lambda_t \left(\sum_{i=1}^M \sum_{a_t=1}^{N_t} x_{i,a_t} {Cost}_{i,a_t} - C_t \right) \\
  \end{split} \\
  &\qquad \qquad \qquad  s.t.  \ \  \sum_{a_t=1}^{N_t} x_{i,a_t} \leq 1, \ \  \forall i,t \\
  &\qquad \qquad \qquad \qquad  \ \  x_{i,a_t} \in \{0, 1\}, \ \  \forall i,a_t \\ \label{formula:proof-indicator}
  & \qquad \qquad \qquad \qquad \qquad \qquad   \lambda_t \geq 0
\end{align}
In phase $t$, for request $i$, there is one and only one action $a_t$ can be taken. Then the dual problem above can be further transformed as:
\begin{align}
  &\min_{\lambda_t} \sum_{i=1}^M \max_{a_t\in\{1,\dots,N_t\}} \{{Value}_{i,a_t} - \lambda_t {Cost}_{i,a_t}\} + \lambda_t(C_t) \\
  & \qquad \qquad \qquad \qquad \qquad s.t \ \    \lambda_t \geq 0
\end{align}
Thus, we have the global optimal solution to original problem, $x_{i,a_t^*} = 1$ when:
\begin{align}
  a_t^* = \mathop{\arg \max}_{a_t}({Value}_{i,a_t} - \lambda_t Cost_{i, a_t}) \label{formula:optimal-action-for-value-cost-appendix}
\end{align}
Note that a similar proof has been provided in \cite{chen2022bcrlsp}, but the constraint definition of our optimization problem is different from it.
\section{Computation Cost Estimation \label{sec:computation-cost-estimation}} 
Essentially, CRs include computing resources, memory resources, network transmission resources, etc. In real industrial applications, computation cost estimation aims to find a metric that is easy to calculate and can be directly mapped to the amount of computation consumed. CRAS uses \textit{queue length} as the computation cost metric, which is simple and feasible in Elastic Queue scenarios, and we have verified this in Meituan advertising system. 
However, \textit{queue length} does not apply to Elastic Channel and Elastic Model scenarios. Specifically, in Elastic Channel, the primary metric affecting the CR consumption of the retrieval service is the number of requests entering the service. In Elastic Model, the primary metrics affecting the resource consumption of the prediction service are the number of requests and the total number of ads entering the model.
During the model training, we use the number of requests entering the retrieval channel and the number of requests entering the complex prediction model as the computation cost evaluation metrics to facilitate the evaluation of system computation. 
Because the Elastic Queue guarantees the number of ads entering the prediction model, it is reasonable to ignore the number of ads in Elastic Model when training the model.

In the offline experiments and online A/B tests, we also ensured that the number of ads entering the complex prediction model did not exceed the target value.

\section{Hyper-parameters \label{sec:hyper-parameters}}
Table \ref{tab:hyper-parameters settings} lists the hyper-parameters of experiments.
\begin{table}[htbp]
  \centering
  \tabcolsep=0.45cm
  \begin{tabular}{lc}\toprule
              Hyper-parameters & Value  \\
              \midrule
     Adaptive-$\lambda$ update times $K$ & 10 \\
     Learning rate of adaptive-$\lambda$ $\alpha$ & 0.1 \\
     Number of phases & 3 \\
     Sizes of action spaces ($N_c, N_q, N_m$) & (2, 26, 2) \\ 
     Number of heads in the network  & 64 \\
     Size of hidden layer in the network & [128, 64] \\
     Optimizer    & Adam \\
     Learning rate & $3 * 10^{-4}$ \\
     Discount factor $\gamma$ & $0.99$ \\
     Batch size & 8192 \\
     Activation function & ReLU \\
     BCQ threshold $\tau$    & 0.3  \\
     Update frequency of target net $N_{target}$ & 100 \\
     Learning rate of $\lambda$ in Ex-RCPO & $1 * 10^{-4}$ \\
     Temperature coefficient in Ex-CrossDQN & 40 \\
     \bottomrule
  \end{tabular}
  \caption{The hyper-parameters of experiments.}
  \label{tab:hyper-parameters settings}
 \end{table}

\section{ES-MPCA \label{sec:es-mpca}}

Same as RL-MPCA (see more details in Section \ref{sub:weakly-coupled-mdp-problem-formulation}), ES-MPCA also formulates the multi-phase CR allocation problem as a Weakly Coupled MDP problem. The difference is that ES-MPCA solves it with an evolutionary strategies based (ES-based) approach.
To solve the Weakly Coupled MDP problem, we consider it as a black-box optimization problem, aiming to maximize the total business revenue under the CR constraints.

\begin{algorithm}
	\renewcommand{\algorithmicrequire}{\textbf{Input:}}
	\renewcommand{\algorithmicensure}{\textbf{Output:}}
	\caption{Offline Training of ES-MPCA (Based on CEM)}
	\label{alg:offline-training-cem}
	\begin{algorithmic}[1]
		\REQUIRE Number of iteration $I$, the number of parameters $N_{all} = N_{channel} + N_{queue} + N_{model}$, the number of parameters sampled $N_{sample}$, the number of parameters retained $N_{retain}$.
    \STATE Initialize mean $\boldsymbol{\mu}^0 = (\mu_1^0,\dots,\mu_{N_{all}}^0)$ and variance $\boldsymbol{\sigma}^0 = (\sigma_1^0,\dots,\sigma_{N_{all}}^0)$ of parameters.
		\FOR{$i = 1,\dots,I $}
    \STATE Draw sample $\{\boldsymbol{\theta}_1,\dots,\boldsymbol{\theta}_{N_{sample}}\}  \sim N(\boldsymbol{\mu}^{i-1}, \boldsymbol{\sigma}^{i-1})$
		\STATE Evaluate $\{\boldsymbol{\theta}_1,\dots,\boldsymbol{\theta}_{N_{sample}}\}$ by simulation system (Reward Evaluation)
    \STATE Sort $\{\boldsymbol{\theta}_1,\dots,\boldsymbol{\theta}_{N_{sample}}\}$ by the reward $reward = \sum Value(\boldsymbol{\theta}) - \sum_t \lambda_t \min\{C_t - \sum Cost_t(\boldsymbol{\theta}), 0\})$
		\STATE Take top-$N_{retain}$ parameters $\{\boldsymbol{\theta}_1,\dots,\boldsymbol{\theta}_{N_{retain}}\}$, then calculate their mean $\boldsymbol{\mu}^i$ and variance $\boldsymbol{\sigma}^i$
		\ENDFOR
		\ENSURE The best parameter $\boldsymbol{\theta}^* = (\theta_1^*,\dots,\theta_{N_{all}}^*)$
	\end{algorithmic}
\end{algorithm}

In this paper, we use Cross-Entropy Method (CEM) \cite{rubinstein2004cross} to solve the black-box optimization problem. ES-MPCA designs the actions as:
\begin{align}
 channelQuota  &= f_{c}\left(\boldsymbol{\theta}_{c} \boldsymbol{x}_{c}\right) \\
 queueLen  &=f_{q}\left(\boldsymbol{\theta}_{q} \boldsymbol{x}_{q}\right)  \\
 modelQuota  &=f_{m}\left(\boldsymbol{\theta}_{m} \boldsymbol{x}_{m}\right)  
\end{align}
where $channelQuota$, $queueLen$ and $modelQuota$ are retrieval strategy number, truncation length and prediction model number, respectively. $(\boldsymbol{\theta}_{c}, \boldsymbol{\theta}_{q}, \boldsymbol{\theta}_{m})$ and $(\boldsymbol{x}_{c}, \boldsymbol{x}_{q}, \boldsymbol{x}_{m})$ are parameters and features, respectively.

Algorithm \ref{alg:offline-training-cem} describes the training process of CEM-based ES-MPCA. By imposing an extremely large penalty on the parameters that violate the constraint ($\lambda_t$ is generally an extremely large value, e.g., for each phase $t$, $\lambda_t = 10^8$ in our experiments), ES-MPCA always guarantees that the final output optimal parameters $\boldsymbol{\theta}^*$ are those that satisfy the CR constraints. 

Experiment results show that the optimal parameters $\boldsymbol{\theta}^*$ outputted by ES-MPCA always exactly satisfy the CR constraints (i.e., for each phase $t$, $\sum Cost_t(\boldsymbol{\theta}^*) = C_t$ holds), which is consistent with the assumptions and conclusions in Section \ref{sec:constraint-layer}.

\section{Behavioral Policies \label{sec:behavioral-policies}}
In this section, we provide a detailed introduction to behavioral policies. 
Random exploratory policies randomly make decisions in each phase to explore the revenues under different actions, including randomly selecting retrieval channels, truncation lengths, and prediction models. 
Superior policies include ES-based policies and RL-based policies. We train them on a random dataset collected by random exploratory policies. More details of ES-based policies are provided in Appendix \ref{sec:es-mpca}.

\section{Online Serving}

After model training (Algorithm \ref{alg:offline-training}) and $\lambda$-correction (see more details in Section \ref{sec:lambda-correction}), we obtain the trained network $Q_{\theta}$ and trained constraint parameter $\boldsymbol{\lambda}=(\lambda_1,\dots,\lambda_T)$. Algorithm \ref{alg:online-serving} shows the process of online serving for a given request.

\begin{algorithm}
	\renewcommand{\algorithmicrequire}{\textbf{Input:}}
	\renewcommand{\algorithmicensure}{\textbf{Output:}}
	\caption{Online Serving of RL-MPCA}
	\label{alg:online-serving}
	\begin{algorithmic}[1]
		\REQUIRE Trained Network $Q_{\theta}$, trained constraint parameter $\boldsymbol{\lambda}=(\lambda_1,\dots,\lambda_T)$
    \STATE Initialize state $s_0$
    \FOR{$t = 1,\dots,T $}
		\STATE Take action $a_{t}^{*} = \mathop{\arg \max}_{a_{t}} (Q_{\theta}(s_{t}, a_{t})-\lambda_{t} Cost(s_{t}, a_{t}))$
    \STATE Execute allocation following $a_t^*$
    \STATE Observe the next state from system
		\ENDFOR
		\end{algorithmic}  
\end{algorithm}

\end{document}